\date{}
\title{Complexity of Steiner Tree in Split Graphs - Dichotomy Results}
\author{Madhu Illuri, P.Renjith, N.Sadagopan} 
\institute{Indian Institute of Information Technology, Design and Manufacturing, Kancheepuram, India. \\
\email{\{coe11b012,coe14d002,sadagopan\}@iiitdm.ac.in}}
\begin{document}
\maketitle
\begin{abstract}
Given a connected graph $G$ and a terminal set $R \subseteq V(G)$, {\em Steiner tree} asks for a tree that includes all of $R$ with at most $r$ edges for some integer $r \geq 0$.  It is known from [ND12,Garey et. al \cite{steinernpc}] that Steiner tree is NP-complete in general graphs.  {\em Split graph} is a graph which can be partitioned into a clique and an independent set.  K. White et. al \cite{white} has established that Steiner tree in split graphs is NP-complete.  In this paper, we present an interesting dichotomy: we show that Steiner tree on $K_{1,4}$-free split graphs is polynomial-time solvable, whereas, Steiner tree on $K_{1,5}$-free split graphs is NP-complete.  We investigate $K_{1,4}$-free and $K_{1,3}$-free (also known as claw-free) split graphs from a structural perspective.  Further, using our structural study, we present polynomial-time algorithms for Steiner tree in $K_{1,4}$-free and $K_{1,3}$-free split graphs.  Although, polynomial-time solvability of $K_{1,3}$-free split graphs is implied from $K_{1,4}$-free split graphs, we wish to highlight our structural observations on $K_{1,3}$-free split graphs which may be used in other combinatorial problems.
\end{abstract}
\section{Introduction}
Steiner tree is a classical combinatorial optimization problem which continues to attract researchers from  both mathematics and computing.  Interestingly, this problem finds applications in Network Design, Circuit Layout Design, etc., \cite{steinerapp}. Given a connected graph $G$ and a subset of vertices (terminal set) $R \subseteq V(G)$, Steiner tree asks for a tree spanning the terminal set.  The objective is to minimize either the number of edges in the Steiner tree or the number of additional vertices ($Q \subseteq V(G) \setminus R$, also known as Steiner vertices).  It is apparent from the definition that Steiner tree generalizes well-known Minimum Spanning Tree (MST) and Shortest Path problems in general graphs \cite{cormen}.    \\
On the complexity front, Steiner tree in general graphs is NP-complete as there is a polynomial-time reduction from Exact 3 Cover \cite{garey}.  Under the assumption, NP-complete problems are unlikely to have polynomial-time algorithms, it is natural to identify the gap between polynomial-time solvability and NP-completeness by restricting the input instances.  Towards this end, many special graph classes such as chordal, bipartite, planar, split, etc., were discovered in the literature \cite{golumbic}.  Classical problems such as Vertex cover, Clique, Odd-cycle transversal have polynomial-time algorithms when the input is restricted to chordal graphs which are otherwise NP-complete for arbitrary graphs \cite{garey}.  However, other famous problems such as Hamiltonian Path (Cycle), Steiner tree, etc., remain NP-complete even on chordal graphs \cite{white,bertossi}. In fact, Steiner tree is NP-complete on Split graphs which are a strict subclass of chordal graphs \cite{golumbic}.    Steiner tree is considered to be a difficult combinatorial problem compared to other problems as it is NP-complete on almost all special graph classes.  For example, it is NP-complete on planar \cite{recti}, chordal \cite{white}, bipartite \cite{garey}, chordal bipartite \cite{mullersteinernpc} graphs.  Due to its inherent difficulty, this problem has been an active research problem in the literature for the past three decades. \\
When a combinatorial problem is NP-complete on special graph classes such as chordal and split, it is natural to restrict the input further by means of forbidden subgraphs.  For example, Hamiltonian cycle problem is NP-complete in chordal graphs, whereas it is polynomial-time solvable on interval graphs which are chordal and asteroidal-triple free \cite{Keil,hungint,panda,ibarra}.  In this paper, we revisit Steiner tree restricted to split graphs.  It is known from \cite{white}, that Steiner tree on split graphs is NP-complete.  We investigate the complexity of Steiner tree on subclasses of split graphs and present an interesting dichotomy.  Towards this end, we study  $K_{1,3}$-free (claw free) and $K_{1,4}$-free split graphs from both structural and algorithmic perspectives.   In particular, we establish the following results;
\begin{itemize}
\item Steiner tree on $K_{1,5}$-free split graphs is NP-complete.
\item Steiner tree on $K_{1,4}$-free split graphs is polynomial-time solvable.
\end{itemize}
Towards this end, we present a tight lower bound on the size of the Steiner set and our algorithm correctly produces such a Steiner set.  The above results rightly identify the gap between NP-completeness and polynomial-time solvable input instances of Steiner tree problem restricted to split graphs.  Since our contribution evolved from $K_{1,3}$-free split graphs, we highlight structural results of both $K_{1,3}$-free and $K_{1,4}$-free split graphs.  Although, the complexity of Steiner tree in $K_{1,3}$-free split graphs is inferred from $K_{1,4}$-free split graphs, out of combinatorial curiosity, we investigate both graphs from structural perspective and present polynomial-time algorithms for Steiner tree.
To the best of our knowledge, this line of investigation has not been reported in the literature.  The polynomial-time results known in the literature for Steiner tree are for trees and 2-trees \cite{steiner2tree}. \\
As far as parameterized-complexity results are concerned, in \cite{dreyfus} it is shown that Steiner tree in general is  Fixed-parameter Tractable(FPT) if the parameter is the size of the terminal set and it is $W[2]$-hard if the parameter is the size of the Steiner set \cite{saketsteiner}.   From the domain of approximation algorithms, Steiner tree has a polynomial-time approximation algorithm with ratio $2-\frac{1}{|R|}$ \cite{Garg}.  Variants of Steiner tree include Euclidean Steiner tree \cite{euclsteiner}, Rectilinear Steiner tree \cite{recti}, and Directed Steiner tree \cite{fptdirsteiner,dirsteiner}. \\
{\bf Roadmap:}  We present the structural characteristics of $K_{1,3}$-free split graphs in Section 2.  Using the structural observations made, we also present a polynomial-time algorithm to output a Steiner tree in $K_{1,3}$-free split graphs.  Structural characteristics of $K_{1,4}$-free split graph and a polynomial-time algorithm to output a Steiner tree in $K_{1,4}$-free split graphs is presented in Section 3.  Hardness result is addressed in Section 4.\\
%%%%%%%%%%%%%%%%%%%%%%%%%%%%%%%%%%%%%%%%%%%%%%%%%%%%%%%%%%%%%%%
%%%%%%%%%%%%%%%%%%%%%%%%%%     PRELIMINARIES    %%%%%%%%%%%%%%%%%%%%%%
%%%%%%%%%%%%%%%%%%%%%%%%%%%%%%%%%%%%%%%%%%%%%%%%%%%%%%%%%%%%%%%
{\bf Graph-theoretic Preliminaries:} \\
In this paper, we work with connected, simple, unweighted graphs.  Notations are as per \cite{golumbic,west}.  For a graph $G$ the vertex set is $V(G)$ and the edge set is $E(G)=\{\{u,v\}~|~u,v \in V(G)$ and $u$ is adjacent to $v$ in $G$ and $u\neq v\}$.  The neighborhood of vertex $v$ is $N_{G}(v)  = \{u~|~\{u,v\} \in E(G)\}$.  The degree of a vertex $v$ is $d_{G}(v) = |N_{G}(v)|$.   $\delta(G)=$ min $\{d_{G}(v)~|~v\in V(G)\}$.    For a graph $G$ and $S\subseteq V(G)$, $G[S]$ represents the subgraph of $G$ induced on the vertex set $S$.  The subgraph relation is represented as $G[S]\sqsubseteq G$.
A \emph{Split graph} $G=I+C$ is such that $G$ can be partitioned into an Independent Set $I$ and a Clique $C$, $V(G)=I\cup C$.  A clique $C$ is maximal if there does not exist a clique $C^{'}$ such that $C\subseteq V(C^{'})$.  For all split graphs mentioned in this paper we consider $C$ to be a maximal clique unless otherwise stated.  $K_{1,r}$ is a split graph on $r+1$ vertices such that $|C|=1$ and $|I|=r$, $E(K_{1,r})=\{\{x,v\}~|~x\in C, v\in I\}$.   \emph{$K_{1,3}$} is also termed as \emph{claw}.  \emph{Centre vertex} of a $K_{1,r}$ is the vertex of degree $r$.  A graph $G$ is $K_{1,r}$-free if $G$ forbids $K_{1,r}$ as an induced subgraph.  For a vertex $u\in C$, $N_{G}^{I}(u) = N_{G}(u)\cap I$ and $d_{G}^{I}(u) = |N_{G}^{I}(u)|$.  For $S\subseteq C$, $N_{G}^{I}(S)=\bigcup_{v\in S}N_{G}^{I}(v)$, and $d_{G}^I(S)=|N_{G}^{I}(S)|$.  For a split graph $G$, $\Delta_{G}^{I}=$ maximum$\{d_{G}^{I}(v)\},v\in C$ and $V_3=\{u\in C~|~d_{G}^{I}(u)=3\}$. 
Two edges $e_1$ and $e_2$ are non adjacent if they do not share an end vertex in common.  A set of edges $M\subseteq E(G)$ forms a matching of $G$ if every pair of edges in $M$ are non adjacent.  Maximum matching is a matching of maximum cardinality in $G$.  $\alpha(G)$ denotes the size of the maximum matching in $G$.
\vspace{-13pt}
%\begin{figure}%
%\begin{center}
%\includegraphics[scale=1.5]{k.eps}
%\end{center}
%\caption{$K_{1,4}$ }%
%\label{k14}%
%\end{figure}
\section{$K_{1,3}$-free Split Graphs: Structural Results}
\vspace{-10pt}
In this section, we analyze the structure of $K_{1,3}$-free split graphs and we present some interesting structural results.  Further, we show that for a claw-free split graph $G$, if $\Delta_G^I=2$, then $|I|\leq 3$.  This acts as a good handle in yeilding a linear-time algorithm for Steiner tree problem which we see in the later half of this section.  
\vspace{-5pt}
\begin{theorem} \label{thmclaw1} 	
Let $G$ be a connected split graph.  $G$ is claw free if and only if one of the following conditions hold. \\
% \begin{enumerate}
1. $\Delta^{I}_{G}\leq 1$    \\
2. $\Delta^{I}_{G}=2$ and  for every $u,v\in C$ such that $d^{I}_{G}(u)=2$, $N^{I}_{G}(u)\cap N^{I}_{G}(v)\neq \emptyset$
% \end{enumerate}
\end{theorem}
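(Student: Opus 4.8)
The plan is to reduce the two-sided equivalence to a precise description of where an induced $K_{1,3}$ can sit inside a split graph, and then to read off conditions~1 and~2 as exactly the constraints that forbid each possible claw.

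First I would establish the key structural reduction: locating the claw. Let $\{x; a, b, c\}$ be an induced claw with center $x$ and leaves $a,b,c$. The center cannot lie in $I$, because every neighbour of an $I$-vertex lies in $C$ and hence those neighbours are pairwise adjacent, leaving no room for three pairwise non-adjacent leaves; so $x \in C$. Since the three leaves are pairwise non-adjacent and $C$ is a clique, at most one leaf can lie in $C$, so at least two leaves lie in $I$. This pins the claw down to exactly two shapes: shape (a), with $x \in C$ and all three leaves in $I$, which occurs precisely when some clique vertex $x$ has $d^I_G(x) \geq 3$; and shape (b), with $x \in C$, leaves $a,b \in I$ and the third leaf $c \in C$, which — after unwinding the required adjacencies ($a,b \in N^I_G(x)$, $xc$ an edge of the clique) and non-adjacencies ($a,b \notin N^I_G(c)$) — occurs precisely when $a,b \in N^I_G(x)$ while $a,b \notin N^I_G(c)$.

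For the forward direction (claw-free $\Rightarrow$ 1 or 2), absence of shape-(a) claws is exactly $\Delta^I_G \leq 2$. If $\Delta^I_G \leq 1$ we are in case~1; otherwise $\Delta^I_G = 2$, and I would derive condition~2 by contraposition: any $u,v \in C$ with $d^I_G(u)=2$ and $N^I_G(u)\cap N^I_G(v)=\emptyset$ (which forces $u \neq v$) gives, on taking the two vertices of $N^I_G(u)$ together with $v$, an explicit induced claw of shape (b), contradicting claw-freeness. For the backward direction (1 or 2 $\Rightarrow$ claw-free) I would show each hypothesis kills both shapes. Under condition~1, $\Delta^I_G \leq 1$ forbids any center of $I$-degree at least $2$, so neither shape can occur. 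Under condition~2, $\Delta^I_G = 2$ rules out shape (a); and a hypothetical shape-(b) claw $\{x;a,b,c\}$ would force $N^I_G(x)=\{a,b\}$ with $a,b\notin N^I_G(c)$, so applying condition~2 with $u=x$, $v=c$ gives $N^I_G(x)\cap N^I_G(c)\neq\emptyset$, a contradiction.

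I expect the only genuine subtlety to be the mixed claw of shape (b): the naive guess that claw-freeness is equivalent to $\Delta^I_G \leq 2$ is false, and the real content of the theorem is that condition~2 is precisely what additionally forbids a clique vertex from carrying two ``private'' $I$-neighbours relative to some other clique vertex. Once the claw-location step fixes the center in $C$ and bounds the number of clique-leaves by one, the remaining steps are short contrapositive and contradiction arguments; in particular, maximality of $C$ is not needed for this characterization.
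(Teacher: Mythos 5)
Your proposal is correct and takes essentially the same route as the paper's proof: both arguments locate every claw's centre in $C$, observe that at most one leaf can lie in $C$, and then construct or refute exactly the same two claws (all three leaves in $I$ against $\Delta^I_G\leq 2$, and the mixed claw with leaves $a,b\in I$ and one leaf in $C$ against condition~2). The only difference is organizational --- you frontload the claw-shape classification so it serves both directions, whereas the paper invokes it only in the sufficiency argument --- and your side remark that maximality of $C$ is not needed here is accurate.
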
	
\begin{proof}
\emph{Necessity:} Suppose $\Delta_G^I \geq 3$, and let $v\in C$ has at least 3 neighbours, say $x,y,z\in I$. Then the set $\{v,x,y,z\}$ forms a claw in $G$ with $v$ as its centre vertex.  It follows that if $G$ is claw-free, then $\Delta_G^I \leq 2$.  Now suppose $\Delta_G^I = 2$.  Let $u\in C$ such that  there exist vertices $x, y\in I, \{x, y\}\subseteq N_G^I(u)$.  We assume on the contrary that there exist $v\in C, v\neq u$ such that $N_G^I(u) \cap N_G^I(v) =\emptyset$.  Since $C$ is a clique, $\{u,v\}\in E(G)$. It follows that vertices  $\{u,x,y,v\}$ forms a claw in $G$ with $u$ as its centre, a contradiction. This proves Condition \emph{2}, and completes the proof of the forward direction. \\
\emph{Sufficiency:} On the contrary assume that $G$ is not claw free.  No claw in $G$ can have its centre vertex in the set $I$, since for any $v$ in $I$, the set $N_G(v)\subseteq C$ and hence induces a clique in $G$.  So every claw in $G$ has its centre vertex in the set $C$.
Consider a claw with the vertex set $\{v,x,y,z\}$, with the centre $v$ being in $C$.  No two of the other three vertices of the claw can be in $C$, because then there would be an edge between them. So at most one of $\{x,y,z\}$ is in $C$, and the rest (of which there are at least two) are in $I$. It follows that if $G$ contains a claw, then $\Delta_G^I \geq 2$. Equivalently, if $\Delta_G^I \leq 1$ then $G$ is claw-free.
Finally, consider the case where $\Delta_G^I = 2$.  Suppose the vertex set $\{v,x,y,z\}$ induces a claw in $G$, with its centre vertex being $v$. Then $v$ is in $C$, and at least two of $\{x,y,z\}$ are in $I$, as we argued above.  Since $\Delta_G^I = 2$ we get that exactly  two of $\{x,y,z\}$, say $x$ and $y$, are in $I$. Then $z$ is in $C$, and $\{x,z\},\{y,z\}\notin E(G)$.   It follows that $N^{I}_{G}(v)\cap N^{I}_{G}(z)= \emptyset$ which is a contradiction to Condition \emph{2}.  Therefore, our assumption that there exist a claw in $G$ is wrong, and this completes the sufficiency.  Therefore, the theorem follows.$\hfill \qed$
\end{proof}
\vspace{-15pt}
\begin{lemma} \label{lemvileq3}
For a claw-free split graph $G$,  if $\Delta_G^I=2$, then $|I|\leq 3$. 
\end{lemma}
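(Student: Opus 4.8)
\medskip
\noindent\textbf{Proof proposal.} The plan is to argue by contradiction: I would assume $|I|\geq 4$ and derive a violation of Condition~2 of Theorem~\ref{thmclaw1}, which is available since $G$ is a connected claw-free split graph with $\Delta_G^I=2$. Using $\Delta_G^I=2$, fix a vertex $u\in C$ with $N_G^I(u)=\{x,y\}$. From $|I|\geq 4$ I obtain two further distinct vertices $z_1,z_2\in I\setminus\{x,y\}$, and by connectivity (as $I$ is independent) each $z_i$ has a neighbour $w_i\in C$. The whole argument is then a disciplined application of Condition~2, which forces the $I$-neighbourhood of \emph{any} clique vertex to meet the pair $N_G^I(t)$ of every $t\in C$ with $d_G^I(t)=2$.

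First I would determine how $z_1,z_2$ attach to $C$. Condition~2 applied to the pair $(u,w_i)$ (legitimate since $d_G^I(u)=2$) gives $N_G^I(w_i)\cap\{x,y\}\neq\emptyset$, so $w_i$ is adjacent to $x$ or to $y$; combined with $w_i\sim z_i$ and $d_G^I(w_i)\leq 2$ this pins $N_G^I(w_i)$ to $\{z_i,x\}$ or $\{z_i,y\}$, and in particular $d_G^I(w_i)=2$. Condition~2 applied to $(w_1,w_2)$ then forces their common $I$-neighbour to be $x$ or $y$ (it cannot be $z_1$ or $z_2$, which are distinct and lie outside $\{x,y\}$), so after relabelling I may assume $N_G^I(w_1)=\{z_1,x\}$ and $N_G^I(w_2)=\{z_2,x\}$.

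The step I expect to be the real obstacle is that Condition~2 and $\Delta_G^I=2$ by themselves are \emph{not} enough; the conclusion genuinely needs the \emph{maximality} of the clique $C$. Without it the statement fails: taking $u\sim x,y$, $w_1\sim x,z_1$, $w_2\sim x,z_2$ over $C=\{u,w_1,w_2\}$ is claw-free with $\Delta_G^I=2$ and $|I|=4$, but there $x$ is adjacent to all of $C$, so $C$ is not maximal. Since $x\in I$ and $C$ is a maximal clique, $x$ cannot dominate $C$, so some $c_x\in C$ has $c_x\not\sim x$. Running Condition~2 of $c_x$ against $w_1$ and against $w_2$ (both of which have $d_G^I=2$) and using $x\notin N_G^I(c_x)$ forces $z_1,z_2\in N_G^I(c_x)$, whence $N_G^I(c_x)=\{z_1,z_2\}$ and $d_G^I(c_x)=2$. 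One last use of Condition~2, on the pair $(c_x,u)$, then requires $\{z_1,z_2\}\cap\{x,y\}\neq\emptyset$, which is impossible. This contradiction yields $|I|\leq 3$. Beyond recognising that maximality must enter, the only care needed is tracking the ``without loss of generality'' relabelling and the pairwise distinctness of $x,y,z_1,z_2$ throughout.
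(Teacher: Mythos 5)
Your proof is correct, and it takes a genuinely different route from the paper's. The paper does not invoke Theorem~\ref{thmclaw1} at all: writing $N_G^I(v)=\{a,b\}$, $X=N_G(a)$, $Y=N_G(b)$, it shows $C=X\cup Y$, uses maximality of $C$ to get $X\setminus Y\neq\emptyset\neq Y\setminus X$, and then does a case analysis on whether the two additional independent-set vertices $c,d$ attach to opposite sides or to the same side, exhibiting an explicit induced claw in each case (with a further subcase on whether $\{c,z\}\in E(G)$). You instead run everything through Condition~2 of Theorem~\ref{thmclaw1}, which is legitimate since that theorem is proved before this lemma, and the gain is visible: your application of Condition~2 to the pair $(w_1,w_2)$ dispatches in one line exactly the configuration that is the paper's Case~1 (opposite-side attachments), and pinning $N_G^I(c_x)=\{z_1,z_2\}$ exactly replaces the paper's Case~2 subcases with a single clean contradiction against the pair $(c_x,u)$. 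Maximality also enters differently: the paper uses it to make both $X\setminus Y$ and $Y\setminus X$ non-empty, while you use it to produce a clique vertex $c_x$ non-adjacent to the shared neighbour $x$; your small example showing the statement fails for a non-maximal $C$ is a worthwhile observation the paper never makes explicit. What the paper's route buys is self-containedness (explicit claws, no dependence on the characterization theorem); what yours buys is brevity and the elimination of case analysis.
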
 \vspace*{-10pt}
\begin{proof} 
Since $\Delta_G^I=2$, let there exist a vertex $v \in C$ such that $d_{G}^{I}(v) = 2$.  On the contrary, assume that $|I|> 3$, that is, $\{a,b,c,d\}\subseteq I$ such that $N_{G}^{I}(v) = \{a,b\}$.  Let $X = N_{G}(a)$ and $Y = N_{G}(b)$ as shown in Figure \ref{thmclaw1fig}.  If there exist a vertex $t\in C$ such that $t\notin X$, and  $t\notin Y$, then vertices $\{a,b,v,t\}$ induces a claw.  Therefore, $C = X\cup Y$.  If $X\subseteq Y$, then $C\cup \{b\}$ induces a larger clique, which is a contradiction to the assumption on the maximality of clique $C$.  Therefore, $X\not\subseteq Y$ and similarly, $Y\not\subseteq X$.  It follows that, $X-Y\neq\emptyset$ and $Y-X\neq\emptyset$.  For every vertex $v\in X\cap Y$, $\{v,c\}\notin E(G)$ and $\{v,d\}\notin E(G)$ otherwise, $N_{G}^{I}(v)\cup \{v\}$ induces $K_{1,3}$.  
\vspace{-10pt}
\begin{figure}%
\begin{center}
\includegraphics[scale=1.2]{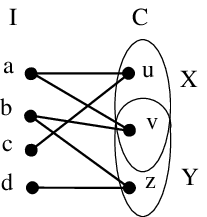} \hspace{50pt}
\includegraphics[scale=1.2]{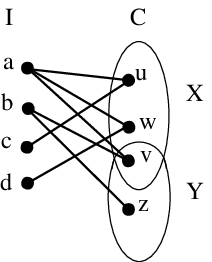}\\ Case \emph{1} \hspace{100 pt} Case \emph{2}
\end{center}
\vspace{-20pt} \caption{An illustration for the proof of Lemma \ref{lemvileq3}}%
\label{thmclaw1fig}%
\end{figure}
\vspace{-5pt}
Therefore, the vertices $c,d$ can have adjacency in two disjoint sets $X-Y$ or $Y-X$.  \\
\emph{Case 1:} $N_{G}(c)\cap X\neq \emptyset$ and $N_{G}(d)\cap Y\neq \emptyset$.   Edge $\{u,c\}\in E(G)$ where $u\in X-Y$ and $\{z,d\}\in E(G)$ where $z\in Y-X$.  Observe that $\{a,z\},\{c,z\}\notin E(G)$ otherwise $N_{G}^{I}(z)\cup \{z\}$ induces $K_{1,3}$.  Similarly, $\{d,u\}\notin E(G)$ otherwise $N_{G}^{I}(u)\cup\{u\}$ induces $K_{1,3}$.  From the discussion, it follows that the vertices $\{u,a,c,z\}$ induces a claw, which is a contradiction.  Similar argument holds for $N_{G}(c)\cap Y\neq \emptyset$ and $N_{G}(d)\cap X\neq \emptyset$.
\\
\emph{Case 2:} $N_{G}(c)\subseteq X$ and $N_{G}(d)\subseteq X$.  Let $\{c,u\}, \{d,w\}\in E(G)$ such that $u,w\in X-Y$.  Note that there exist at least one vertex $z\in Y-X$.  If $\{c,z\}\notin E(G)$, then the vertices $\{u,a,c,z\}$ induces a claw.  If $\{c,z\}\in E(G)$, then the vertices $\{w,a,d,z\}$ induces a claw as, $\{d,z\}\notin E(G)$.   The argument is symmetric for $N_{G}(c)\subseteq Y$ and $N_{G}(d)\subseteq Y$.  \\
Cases \emph{1} and \emph{2} give a contradiction to the fact that $G$ is claw free.  Therefore, our assumption that $|I|> 3$ is wrong, and hence, the lemma follows. \hfill \qed
\end{proof}
\vspace{-20pt}
\subsection{Application: Steiner tree in $K_{1,3}$-free Split Graphs} \label{section2.1}
Using the structural results presented in Section \emph{2}, in this section, we present a polynomial-time algorithm to find minimum Steiner tree in $K_{1,3}$-free split graphs.   Optimum version of Steiner tree problem is defined as follows;
\vspace{-8pt}
\begin{center}
\fbox{\parbox[c][][c]{0.8\textwidth}{    
\emph{OPT Steiner tree(G,R)}\\
Instance:	Graph $G(V,E)$, Terminal Set $R\subseteq V(G)$\\
Question:	Find a minimum cardinality set $S\subseteq V(G)\backslash R$ such that $G[S\cup R]$ is connected?} }	
\end{center}
We here consider the Steiner tree problem on split graph $G^0=I^0+C^0$.  Due to pruning, we iteratively construct split graphs $G^1$, $G^2$ from the input graph $G=G^0$.   We simplify	the input by pruning the vertices which are not part of any optimum solution.  The pruned graph $G^1$ is the graph induced on the vertex set $V(G^0)\backslash (S_1\cup S_2\cup S_3)$.  Clearly, $G^{1}\sqsubseteq G^0$ and let $G^{1}=I^{1}+C^{1}$.   We prune three sets of vertices $S_1,S_2,S_3$ one after the other and are defined as follows.
% \begin{enumerate
$S_1=$ $\{a\in I^0$ $|~a\notin R\}$.
$S_2=$ $\{u\in C^0$ $|~u\notin R$ and $N_{G}^{I^0}(u)\cap R=\emptyset\}$.
Let $R^{'}=\{v\in C^0~|~v\in R\}$.  $S_3=$ $\bigcup\limits_{v\in R^{'}}\{ v \} \cup N_{G}^{I^0}(v)$. 
% \end{enumerate}
Consider the Steiner tree optimization problems $P_1$, and  $P_2$ defined as follows.\\
% \begin{itemize}
$P_1$: OPT Steiner tree($G^0,R$) \\
$P_2$: OPT Steiner tree($G^{1},R\backslash S_3$) \\
% \end{itemize}
\vspace{-15pt}
\begin{lemma} \label{lem1}
An optimum solution $Q$ to $P_2$ is also an optimum solution to $P_1$.
\end{lemma}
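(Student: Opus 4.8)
The plan is to establish the two value inequalities $\mathrm{OPT}(P_1)\le |Q|$ and $\mathrm{OPT}(P_2)\le \mathrm{OPT}(P_1)$, and then combine them with the feasibility of $Q$ for $P_1$ to conclude $|Q|=\mathrm{OPT}(P_1)$, i.e.\ that $Q$ is optimal for $P_1$. The observation that organizes the whole argument is that every Steiner vertex may be taken inside the clique: $Q$ is disjoint from the terminal set $R\setminus S_3$, and the only non-terminal vertices of $I^0$ are exactly those of $S_1$, which are deleted in $G^1$; hence $Q\subseteq C^1\subseteq C^0$. Two elementary consequences of the clique structure drive the connectivity transfers: any retained set of Steiner vertices induces a complete (hence connected) subgraph, and every clique terminal is adjacent to every clique vertex.

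For the feasibility direction (which yields $\mathrm{OPT}(P_1)\le |Q|$), I would start from the hypothesis that $G^1[Q\cup(R\setminus S_3)]$ is connected and add back the deleted terminals $R\cap S_3 = R'\cup (N_G^{I^0}(R')\cap R)$. Each clique terminal $v\in R'$ is adjacent to every vertex of $Q\subseteq C^0$, so $R'$ attaches to the existing component; and each independent terminal in $N_G^{I^0}(R')\cap R$ attaches to its clique-terminal neighbour in $R'$. This would give that $G^0[Q\cup R]$ is connected, so $Q$ is a feasible Steiner set for $P_1$.

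For the reverse inequality $\mathrm{OPT}(P_2)\le \mathrm{OPT}(P_1)$, I would take an optimal Steiner set $Q^{\ast}$ of $P_1$ and set $Q^{\ast\ast}=Q^{\ast}\cap C^1$, discarding every Steiner vertex lying in $S_1\cup S_2\cup S_3$. Clearly $|Q^{\ast\ast}|\le|Q^{\ast}|$, and I claim $Q^{\ast\ast}$ is feasible for $P_2$. Fix any terminal $t\in R\setminus S_3$; in the tree $G^0[Q^{\ast}\cup R]$ it has a clique neighbour $w_t$. This $w_t$ cannot be a clique terminal, for otherwise $t\in N_G^{I^0}(R')\subseteq S_3$, and it cannot lie in $S_2$, since it has the terminal neighbour $t$; hence $w_t\in C^1$ and so $w_t\in Q^{\ast\ast}$. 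Because $Q^{\ast\ast}$ induces a clique and every terminal of $R\setminus S_3$ attaches to it, $G^1[Q^{\ast\ast}\cup(R\setminus S_3)]$ is connected, giving $\mathrm{OPT}(P_2)\le|Q^{\ast\ast}|\le\mathrm{OPT}(P_1)$. Chaining the two inequalities with feasibility then forces $|Q|=\mathrm{OPT}(P_2)=\mathrm{OPT}(P_1)$.

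The hard part will be making the connectivity transfers rigorous in the degenerate configurations rather than the generic one. The rerouting tacitly assumes that a retained clique vertex is present to carry each connection, and the delicate case is when $R\setminus S_3$ consists of a single independent terminal while $R'\neq\emptyset$: there $Q=\emptyset$ can be optimal for $P_2$, yet a bridging clique vertex is still required in $P_1$, so the feasibility step for $Q$ needs separate care. I would therefore split into the cases $R'=\emptyset$ and $R'\neq\emptyset$, dispose of the zero- and one-terminal subcases directly, and justify the deletions of $S_1$ and $S_2$ from $Q^{\ast}$ by appealing to completeness of the clique to reroute any tree path that passed through a discarded vertex without increasing the solution size.
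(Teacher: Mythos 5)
Your two-inequality plan is essentially the paper's argument made rigorous, and on the generic case it is correct. The feasibility half (add back $S_3$: every clique terminal of $R'$ is adjacent to any clique Steiner vertex, and every independent terminal of $S_3$ hangs off its clique-terminal neighbour) is exactly the paper's bridging argument. The reverse inequality is where you do strictly better than the paper: the paper merely asserts that $S_1,S_2$ ``are not part of any optimum solution'' and then declares $Q$ minimum for $P_1$, whereas your restriction $Q^{**}=Q^{*}\cap C^1$, together with the observation that each terminal $t\in R\setminus S_3$ keeps a neighbour $w_t$ with $w_t\notin R'$ (else $t\in S_3$), $w_t\notin S_2$ (it sees the terminal $t$), and $w_t\notin S_1\subseteq I^0$, is an actual proof of $\mathrm{OPT}(P_2)\le\mathrm{OPT}(P_1)$.

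The trouble is the degenerate case you flagged, and it cannot be ``disposed of'': it is a counterexample to the lemma as stated, not a delicate subcase. Take $C^0=\{c_1,c_2\}$, $I^0=\{a,t\}$, edges $\{c_1,c_2\},\{c_1,a\},\{c_2,t\}$, and $R=\{c_1,a,t\}$ (a claw-free split graph with $C^0$ maximal). Then $S_1=S_2=\emptyset$, $S_3=\{c_1,a\}$, $G^1=G^0[\{c_2,t\}]$, and $R\setminus S_3=\{t\}$. The optimum of $P_2$ is $Q=\emptyset$, since a single terminal is trivially connected; but $G^0[R]$ is disconnected ($t$ has no neighbour in $R$), so $\emptyset$ is not even feasible for $P_1$, whose optimum is $\{c_2\}$. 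The paper's own proof slides over exactly this point when it claims that $V(G^1)\cap R\neq\emptyset$ forces $Q$ to contain a vertex of $C^1$ --- that holds only when at least two terminals survive the pruning. So your instinct about where the argument is fragile was exactly right, but no case split can rescue the statement; it must be weakened (e.g., assume $|R\setminus S_3|\neq 1$ or $R'=\emptyset$), or one must argue correctness of the downstream algorithm separately in this case --- which happens to work out, because Algorithm 1 always selects a clique neighbour for every surviving terminal and so never outputs $\emptyset$ when $I^1\neq\emptyset$, even though that output is then not an optimum for $P_2$.
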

% \vspace{-10mm}
\begin{proof}
Note that the first two sets $S_1,S_2$ pruned from $G^0$ are not part of any optimum solution.   $S_3\subseteq R$ induces a connected subgraph of $G^0$ which is also pruned to obtain $G^{1}$.    If $V(G^{1})\cap R=\emptyset$, then Steiner set of $P_2$ is empty.  i.e., $R$ induces a connected subgraph of $G^0$.   On the other hand if  $V(G^{1})\cap R\neq\emptyset$, then there exist at least one vertex $v\in C^1$ in the Steiner set  $Q$ of $P_2$.   $Q\subseteq C^1$ connects all terminal vertices $R\backslash S_3$.   If $S_3\neq\emptyset$, then there exist at least one vertex $u\in S_3$ such that $u\in C^0$ and $u\in R$.  $\{u,v\}\in E(G^0)$ and therefore,  $Q\cup R$ induces a connected subgraph of $G^0$ and $Q$ is a minimum Steiner set for $P_1$.  Hence, the lemma follows. \hfill \qed
\end{proof}
\vspace{-22pt}
\subsubsection{A polynomial-time algorithm to find a minimum Steiner tree }
Given a $K_{1,3}$-free split graph $G^0$ with terminal vertex set $R\subseteq V(G^0)$,  we present a polynomial-time algorithm to find a minimum Steiner tree.  % The algorithm does initial pruning in step 1 and finds the pruned graph $G^{1}$.  
As part of preprocessing step, we prune the sets $S_1,$ and $S_2$, which are not part of any optimum solution.  Further, we delete terminals which are in $C$, and their neighbours in $I$, namely the set $S_3$.  Now we have an instance of Steiner Tree in claw-free split graphs where all the terminals are in the independent set.  An optimum solution to the pruned graph is also an optimum solution to the original graph by the previous lemma.  We now present a sketch of algorithm and the detailed one is presented in Algorithm \ref{alg1}.  If $\Delta_G^I = 0$, then the instance is trivial. If $\Delta_G^I = 1$, then Steiner set should contain one neighbor vertex in $C$ of each terminal in $I$. % we have an instance of Hitting Set where the sets are pairwise disjoint, and this is easy to solve: pick one vertex from each of the sets (Each set is as the neighbourhood in $C$ of a terminal vertex).
In the remaining case, $\Delta_G^I = 2$ and therefore, by Lemma 1 $|I| \leq 3$. The only non-trivial case is when $|I| = 3$. From the constraints of the instance, we know that it is necessary and sufficient to pick exactly two Steiner vertices from $C$ in this case.
\begin{algorithm}[!h] 
\caption{Steiner tree in Claw free Split graphs.  Steiner\_tree($G^0,R$)} \label{alg1}
\begin{algorithmic}[1]
\STATEx {/*$G^0$ \texttt{is a claw-free split graph and} $R$ $\subseteq$ $V(G^0)$ \texttt{is the set of terminal vertices} */}
\STATE { Find the pruned graph $G^{1}$=Pruning($G^0,R$) }
\STATE {Initialize the output set of Steiner vertices $S=\emptyset$ and unmark every vertices in $I^1\subseteq V(G^{1})$}
\IF {  $\Delta_{G^{1}}^{I}=1$}
	\FOR {every unmarked vertex $d\in I^1$}
		\STATE {include $w\in C^1$ in $S$ where $\{d,w\}\in E(G^{1})$.}
		\STATE {mark vertex $d$.}
  \ENDFOR
\ELSE
 \STATE {include vertex $x\in C^1$ in $S$ where $|N_{G^{1}}^{I}(x)|=2$. i.e.,$N_{G^{1}}^{I}(x)=\{a,b\}$}
 \IF {$|I^1|=3$. i.e.,$I^1=\{a,b,c\}$}
   \STATE {include $y\in C^1$ in $S$ where $\{c,y\}\in E(G^{1})$}
 \ENDIF
\ENDIF
\STATE {Run standard Breadth First Search in the graph $G[S\cup R]$ and output the BFS tree.}
\end{algorithmic}
 \end{algorithm}
\vspace{-18pt}
 \begin{algorithm}[!h]
\caption{Pruning the input instance of Steiner tree.  Pruning($G^0,R$) } \label{alg2}
\begin{algorithmic}[1]
\STATEx {/* $G^0:$\texttt{input claw-free split graph,} $R:$\texttt{set of terminal vertices} */}
%\STATE {Prune the vertices $v\in I$ such that $v\notin R$.} %from Independent set $I$ which are not part of Terminal set $R$
%\STATE {Prune the vertices $v\in C$ such that $N_{G}(v)\cap R=\emptyset$.}  % from Clique set $C$ which are not adjacent to atleast one vertex in terminal set $R$
%\STATE {Find $R^{'}=\{v\in C, u\in I~|~v\in R$ and $u\in N_{G^{'}}^{I}(v)\}$.}
%\STATE {For every vertex $v\in R^{'}$, prune $v$.}
\STATE{Find the sets $S_1,S_2,S_3$ in order and prune those vertices from $G^0$. i.e., $G^{1}=G^0\backslash S$ where $S=S_1\cup S_2\cup S_3$}
\STATE {Return the pruned graph $G^{1}$.}
\end{algorithmic}
\end{algorithm}
\vspace{-25pt}
\subsubsection{Proof of Correctness of Algorithm \ref{alg1}} \noindent \\
By Lemma \ref{lem1}, a minimum Steiner set of pruned graph $G^{1}$  is an optimum Steiner set for $G^0$.   Therefore, pruning in Step $1$ is a solution preserving operation.  We present a case analysis to show that our algorithm outputs a minimum Steiner tree of a claw-free split graph.  
\\
\textit{Case 1:} $\Delta^{I}_{G^{1}}\leq1$.  Note that for every vertex $d\in I^1$, Step $5$ includes exactly one vertex $w\in N_{G^{1}}(d)$ in $S$, which is a minimum Steiner set.
\\
\textit{Case 2:} $\Delta^{I}_{G^{1}}=2$. 
Observe $|I|\leq 3$ by Lemma \ref{lemvileq3}.  $|S|=1,2$ for $|I|=2,3$, respectively, which is done by Steps $9,11$.  Therefore, $S$ is a minimum Steiner set for $G^{1}$, and by Lemma \ref{lem1}, $S$ is also a minimum Steiner set for $G^0$.  Step $14$ outputs a Steiner tree by running standard Breadth First Search algorithm on $G[S\cup R]$.
\vspace{-18pt}
\subsubsection{Run Time Analysis} \noindent \\
We represent the input claw-free split graph using an adjacency list, as we can easily find a neighbor of a given vertex.  Vertices in adjacency list are arranged such that $C^0$ follow $I^0$.  Intuition behind this ordering is that, first neighbor of a vertex $v\in C^0$ encountered in the list is always a vertex $u\in I^0$, if it exists.  If $\Delta^{I}_{G^{1}}=1$, then $u\in N_{G^{1}}(v)$ can be determined in constant time.  Therefore, Algorithm \ref{alg1} takes linear time $O(n), n=|V(G^0)|$ to output a minimum Steiner set.
\vspace{-15pt}
\section{$K_{1,4}$-free Split Graphs: Structural Results}
\vspace{-10pt}
In this section, we first analyze the structure of $K_{1,4}$-free split graphs.  Subsequently we investigate Steiner tree problem restricted to $K_{1,4}$-free split graphs.  Towards this end, we give a nice bound on the cardinality of any minimum Steiner set.  Further, we present a structural characterization of $K_{1,4}$-free split graph meeting the bound.  Interestingly, the characterization yields a polynomial-time algorithm to output a minimum Steiner tree, which we shall present in Section \ref{K14algsection}.\\
Before we present the structural results, we introduce some additional terminologies.   
A split graph $G$ is a $l$-split graph if $\Delta_{G}^{I}=l$.  Note that a $K_{1,4}$-free split graph is a $l$-split graph for some $l, 0\leq l\leq 3$, and the converse does not always hold.  In a split graph $G$, closed neighborhood of a vertex $u\in C$ is $[N(u)]=\{u\}\cup N_G^I(u)$.   
% For a $K_{1,4}$-free  $3$-split graph $G$, $B=\{a_1,a_2,\ldots,a_k\}\subseteq I$ is called a \emph{k-universal set} if $V(G)\backslash B$ induces a $2$-split graph.  
For a $l$-split graph $G=I+C, 0\leq l\leq 2$, we construct a \emph{labeled} graph $M$ such that $V(M)=I$ and $E(M)=\{\{a,b\}~|~a,b\in I$ and $N_{G}(a)\cap N_{G}(b)\neq \emptyset\}$ and label the edge $\{a,b\}$ as $v_{ab}$.  Note that $v$ in $v_{ab}$ denotes a vertex $v\in N_{G}(a)\cap N_{G}(b)$.  Also, we pick exactly one $v\in N_{G}(a)\cap N_{G}(b)$ to label the edge $\{a,b\}$.  For any edge set $E^*\subseteq E(M)$, we define the \emph{corresponding vertex} set $V^*$ as follows. 
%$  =\{v\in C~|~$ $v\in N_{G}(a)\cap N_{G}(b)$, label$(\{a,b\})=v_{ab}$ and $\{a,b\}\in E^* \}$.  
Corresponding to each edge $\{a,b\}\in E^*$, include exactly one vertex $v\in N_{G}(a)\cap N_{G}(b)$ in $V^*$.  It follows that, $V^*\subseteq C$ and $|V^*|=|E^*|$.  Clearly, $|V^*|\leq |E^*|$ as we are including not more than one vertex in $V^*$ corresponding to each edge in $E^*$.  Suppose $|V^*|<|E^*|$, then there exist at least two edges labelled $v_{ab}, v_{cd}$ in $E^*$ such that $v\in N_{G}(a)\cap N_{G}(b)$ and $v\in N_{G}(c)\cap N_{G}(d)$.  Since edges $\{a,b\}, \{c,d\} \in E^*$ can share atmost one vertex in common, it follows that, $d^I_G(v)\geq 3$, which is a contradiction as $G$ is $l$-split, $l\leq2$ and $M$ is the labelled graph of $G$.  Therefore, $|V^*|=|E^*|$.  We also define the \emph{Corresponding clique} set $V^c$ of a vertex set $V'\subseteq I$ as follows.  Corresponding to each vertex $u\in V'$, include exactly one vertex $w$ in $V^c$ such that  $\{u,w\}\in E(G)$.  Clearly, $V^c\subseteq C$ and $|V^{c}|\leq|V'|$.  For a $1$-split graph, $|V^c|=|V'|$.
%
%  Note that we represent an arbitrary $K_{1,4}$-free input graph using $G^0$ and $G^1$ represent the pruned graph of $G^0$ as mentioned in Section $2.1$ unless otherwise stated.  Let $M$ be the labelled graph of $G^1$.  We identifies a vertex set $V^*\subseteq C^1$ as \emph{corresponding } vertex set of the edge set $E^*\subseteq E(M)$ such that for each edge $\{a,b\}\in E^*$, include exactly one vertex $v\in N_{G^1}(a)\cap N_{G^1}(b)$ in $V^*$.  
%%%%%%% INCLUDE THIS OR NOT ????????%%%%%%%%%  %%%%%%%%%  %%%%%%%%%  %%%%%%%%%  %%%%%%%%%  
%%%%%%     To explain the subsequent structural properties and their proofs, we use the following notations.   $G^{1}=I^1+C^1$ is the pruned graph of $G=G^0$.   $G^1$ is a $3$-split unless otherwise stated.  $G^2=I^2+C^2$ is a $2$-split graph which is an induced subgraph of $G^1$.   
%%%%%%% We call clique vertices $u,v\in C^1$ as \emph{redundant} vertices if $N_{G^1}(u)=N_{G^1}(v)$.   During pruning if redundant vertices $u,v$ exist, then we retain $u$ and remove $v$.
%%%%%%%%%  %%%%%%%%%  %%%%%%%%%  %%%%%%%%%  %%%%%%%%%  %%%%%%%%%  %%%%%%%%%  %%%%%%%%%  
We now present some structural observations on  $K_{1,4}$-free split graphs. 
%Intuition behind the deletion is that **both** the vertices of any duplicate pairs are not found in any minimum Steiner set of $G^0$.
%Note that cardinality of minimum Steiner set in a $2$-split graph $H$ where $R=V(I^{'})$ is equal to $|V(I^{'})|-m$ where $m$ is the size of a maximum matching in $M$.  
\vspace{-5pt}
\begin{lemma} \label{lem3}
Let $G$ be a $3$-split graph.  $G$ is $K_{1,4}$ free if and only if for every $u\in V_3$ and for every $v\neq u\in C$, $N_{G}^{I}(u)\cap N_{G}^{I}(v)\neq\emptyset$. 
\end{lemma}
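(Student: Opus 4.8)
The plan is to prove both implications by contradiction, following the template of the claw-free characterization in Theorem~\ref{thmclaw1}.

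For necessity, I would assume $G$ is $K_{1,4}$-free and suppose toward a contradiction that there exist $u \in V_3$ and $v \in C$ with $v \neq u$ and $N_G^I(u) \cap N_G^I(v) = \emptyset$. Since $u \in V_3$, it has exactly three neighbours $x,y,z \in I$, and since $C$ is a clique we have $\{u,v\} \in E(G)$. The emptiness of the intersection forces $v$ to be non-adjacent to each of $x,y,z$ (these lie in $I$, so any adjacency to $v$ would place them in $N_G^I(v)$). As $x,y,z$ are pairwise non-adjacent independent-set vertices, the set $\{u,x,y,z,v\}$ induces a $K_{1,4}$ centred at $u$, contradicting $K_{1,4}$-freeness. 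This direction is short and essentially mirrors the necessity argument of Theorem~\ref{thmclaw1}.

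For sufficiency, I would assume the neighbourhood condition holds and suppose $G$ contains an induced $K_{1,4}$ with centre $w$ and leaves $p,q,r,s$. First I would locate the centre: as in the sufficiency proof of Theorem~\ref{thmclaw1}, $w$ cannot lie in $I$ (its neighbourhood would be a clique, so the leaves could not be independent), hence $w \in C$. Next, since the four leaves are pairwise non-adjacent and $C$ is a clique, at most one leaf lies in $C$; thus at least three leaves lie in $I$, giving $d_G^I(w) \geq 3$. Because $G$ is $3$-split, $\Delta_G^I = 3$, so $d_G^I(w) = 3$ and $w \in V_3$. This already rules out the possibility that all four leaves are in $I$ (that would give $d_G^I(w) \geq 4 > \Delta_G^I$). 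Hence exactly three leaves, say $p,q,r$, lie in $I$ and are precisely $N_G^I(w)$, while the remaining leaf $s$ lies in $C$ with $s \neq w$. Now I would apply the hypothesis to the pair $w \in V_3$ and $s \in C$: it yields a vertex $t \in N_G^I(w) \cap N_G^I(s)$. But $t \in N_G^I(w) = \{p,q,r\}$ is then a leaf adjacent to the leaf $s$, contradicting the independence of the leaves of the $K_{1,4}$. Therefore no induced $K_{1,4}$ exists.

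The step I expect to require the most care is the centre-location and case split in the sufficiency direction: one must argue cleanly that the centre lies in $V_3$ (not merely in $C$) and that exactly one leaf falls in $C$, using $\Delta_G^I = 3$ both to force $d_G^I(w)=3$ and to exclude the all-leaves-in-$I$ configuration. Once the $K_{1,4}$ is pinned into the shape ``$V_3$ centre, three $I$-leaves forming $N_G^I(w)$, one $C$-leaf,'' the hypothesis applies directly and the contradiction is immediate.
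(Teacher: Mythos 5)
Your proof is correct and takes essentially the same approach as the paper: necessity builds the induced $K_{1,4}$ on $\{u,v\}\cup N_{G}^{I}(u)$, and sufficiency places the centre of any induced $K_{1,4}$ in $C$, forces it into $V_3$ with exactly one leaf in $C$, and contradicts the intersection hypothesis. Your write-up is in fact a bit more careful than the paper's at the one delicate step (deriving $d_{G}^{I}(w)=3$ from ``at most one leaf in $C$'' plus $\Delta_{G}^{I}=3$, which the paper asserts rather tersely), but it is the same argument.
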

\begin{proof}\emph{Necessity: } On the contrary, let us assume there exist $v \in C$ such that $N_{G}^{I}(u)\cap N_{G}^{I}(v)=\emptyset$.  Since $d_{G}^{I}(u) =3$, vertices $\{u,v\}\cup N_{G}^{I}(u)$ induces a $K_{1,4}$, which is a contradiction and the necessary condition follows.\\
%
%Let $a$, $a \in C$ be the vetex with degree 3($d_{G}^{I}$=3) and $u,v,w$ be its neighbors. Let us assume that there exists a vertex  $b$, $b \in C$ where $b$ is not adjacent to atleast one $N_{G}^{I}(a)$. Then vertices a,u,v,w,b will form $K_{1,4}$ which is a contradiction and hence the necessity condition follows. \\
\emph{Sufficiency: } On the contrary, assume that $G$ is not $K_{1,4}$ free and there exists a $K_{1,4}$ induced on $\{u,v,w,x,y\}$ with $u$ as the centre vertex.   No $K_{1,4}$ in $G$ can have its centre vertex in the set $I$, since for any $u$ in $I$, the set $N_{G}(u)$ is a subset of the set $C$ and hence induces a clique in $G$.  So every $K_{1,4}$ in $G$ has its centre vertex in the set $C$ particularly, $u\in C$.  Since $G$ is a $3$-split graph, $d_{G}^{I}(u)= 3$.  
% Note that it cannot be the case that  $v,w,x,y\in I^1$ as $d_{G^1}^{I}(z)\leq 3$ for every $z\in C^1 $.  
This implies that there exist at least one vertex of $K_{1,4}$, say $v\in C$, and $u\in V_3$.  It follows that $N_{G}^{I}(u)\cap N_{G}^{I}(v)=\emptyset$, which is a contradiction and the sufficiency follows.  This completes the proof of the lemma.  \hfill \qed
\end{proof}
\vspace{-15pt}
\begin{corollary} \label{cor1}
Let $G$ be a $K_{1,4}$-free  $3$-split graph.  For any $v\in C$, the graph $H$  induced on the vertex set $V(G)\setminus N_{G}^I(v)$ is a $l$-split graph for some $0\leq l\leq 2$.
% where $V^{'}=\{v\}\cup N_{G}^{I}(v)$ for some arbitrary $v\in C$.  Then, there does not exist $x\in C$ such that $d_{G^{'}}^{I}(x)=3$.
\end{corollary}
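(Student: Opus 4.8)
The plan is to track how the deletion of $N_G^I(v)$ changes the independent-set degrees of the clique vertices, and then to invoke Lemma~\ref{lem3} to control the only vertices that could cause trouble. First I would observe that $H$ is again a split graph: since $N_G^I(v)\subseteq I$, removing these vertices leaves the clique $C$ untouched and keeps $I':=I\setminus N_G^I(v)$ independent, so $H=I'+C$ is a split partition. Proving that $H$ is an $l$-split graph with $l\le 2$ therefore reduces to showing $d_H^I(u)\le 2$ for every $u\in C$. The computation driving everything is the identity $d_H^I(u)=|N_G^I(u)\setminus N_G^I(v)|=d_G^I(u)-|N_G^I(u)\cap N_G^I(v)|$, valid for all $u\in C$, since the $I'$-neighbours of $u$ in $H$ are exactly those $I$-neighbours of $u$ in $G$ that were not deleted.

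Next I would split on the value of $d_G^I(u)$. If $u\notin V_3$, then $d_G^I(u)\le 2$ because $G$ is a $3$-split graph, and the identity gives at once $d_H^I(u)\le d_G^I(u)\le 2$. The only remaining vertices are those in $V_3$, for which $d_G^I(u)=3$, and here Lemma~\ref{lem3} does the work. For $u=v$, all three $I$-neighbours of $v$ are deleted, so $d_H^I(v)=0$. For $u\in V_3$ with $u\neq v$, Lemma~\ref{lem3} (applied to the $K_{1,4}$-free $3$-split graph $G$) yields $N_G^I(u)\cap N_G^I(v)\neq\emptyset$, so at least one of the three $I$-neighbours of $u$ is removed and $d_H^I(u)\le 2$. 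Collecting the cases gives $\Delta_H^I\le 2$, which is the desired bound for the partition $I'+C$.

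The one technical point, and the step I expect to need the most care, is that \emph{$l$-split} is defined relative to a \emph{maximal} clique, whereas $C$ need not remain maximal inside $H$. I would handle this by greedily extending $C$ to a maximal clique $C_H$ of $H$, adjoining vertices of $I'$ that are universal to the current clique; the leftover set $I'\setminus(C_H\setminus C)$ is a subset of the independent set $I'$ and hence is still independent, so $(I'\setminus(C_H\setminus C))+C_H$ is a valid split partition with $C_H$ maximal. Under this partition every vertex newly moved into the clique came from the independent set $I'$, so it has no neighbour in the remaining independent set and contributes independent-set degree $0$, while the degrees of the original vertices of $C$ can only drop as the independent set shrinks. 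Thus $\Delta_H^I\le 2$ holds for a maximal-clique partition as well, so $H$ is an $l$-split graph for some $0\le l\le 2$, completing the proof.
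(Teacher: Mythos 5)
Your proof is correct and follows essentially the same route as the paper: both hinge on Lemma~\ref{lem3}, which forces every vertex of $V_3$ to share an $I$-neighbour with $v$, so that deleting $N_{G}^{I}(v)$ drops every clique vertex's independent-set degree to at most $2$ (the paper phrases this as a contradiction, producing an induced $K_{1,4}$ from a hypothetical vertex $w$ with $d_{H}^{I}(w)=3$, while you argue it directly via the degree identity). Your additional care in restoring a maximal-clique partition of $H$ is a sound refinement of a point the paper passes over silently, but it does not change the substance of the argument.
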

On the contrary, suppose there exists a vertex $w\in C$ such that $d_{H}^I(w)=3$.  i.e., $w\in V_3$.  It follows that $N_{G}^{I}(w)\cap N_{G}^{I}(v)=\emptyset$.  By previous lemma,  $N_{G}^I(w)\cup \{w,v\}$ induces a $K_{1,4}$, which is a contradiction. \hfill \qed 
%Let G be a connected $K_{1,4}$ free split graph. If there exists a vertex $a$ with $d_{G}^{I}(a)= $ 3 where $a \in C$ and a vertex $v$ with $d_{G}(v)= $ 1 where $v \in I$, removing $v$ from G will result a graph such that every vertex $b \in C$ will have a degree $d_{G}^{I}(b) \leq 2 $.
%
%On the contrary, let us assume that there exist $x\in C$ such that $d_{G^{'}}^{I}(x)=3$.  Observe that $\{x,v\}\cup N_{G}^{I}(v)$ induces a $K_{1,4}$, which is a contradiction and our assumption is wrong.  Therefore, the lemma. \hfill \qed
\noindent
% The above corollary shows that for a $K_{1,4}$ free and $3$-split graph $G$, $N_{G}^I(v)$ for any $v\in C$ is a $d$-universal set, where $d=d_{G}^I(v)$.
%%%%%%%%%%%%%%%%%%%%%%%%%%%%%%%%%%%%%%%%%%%%%%%%%%%%%%%%%%%%%%%%%%%%%%%%%%%%%%%%%%%%%%%%%%%%%%%%%%%%%%%%%%%%%%%%%%%%%%%%%%%%%%%%%%%%%%%%
\begin{corollary}  \label{cor2}
Let $G$ be a $K_{1,4}$-free split graph and $v\in C$. If $N_{G}^{I}(v)=\{v_{1},v_{2},v_{3}\}$, then $N_{G}(v_{1})\cup N_{G}(v_{2})\cup N_{G}(v_{3})=C$.
\end{corollary}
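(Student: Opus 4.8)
The plan is to establish the set equality by proving the two inclusions $C\subseteq N_G(v_1)\cup N_G(v_2)\cup N_G(v_3)$ and $N_G(v_1)\cup N_G(v_2)\cup N_G(v_3)\subseteq C$ separately, with almost all of the substance going into the first one via Lemma \ref{lem3}.

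First I would check that the hypotheses place us in a setting where Lemma \ref{lem3} is applicable. The vertex $v\in C$ has $d_G^I(v)=3$, so $\Delta_G^I\geq 3$; and since $G$ is $K_{1,4}$-free, no vertex of $C$ can have four neighbours in $I$, as that would induce a $K_{1,4}$ centred at that vertex. Hence $\Delta_G^I=3$, so $G$ is a $3$-split graph and $v\in V_3$, which is exactly what is needed to invoke Lemma \ref{lem3}.

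For the inclusion $C\subseteq N_G(v_1)\cup N_G(v_2)\cup N_G(v_3)$, I would fix an arbitrary $w\in C$ and split on whether $w=v$. If $w=v$, then $w$ is adjacent to each $v_i$ directly from $N_G^I(v)=\{v_1,v_2,v_3\}$. If $w\neq v$, then applying Lemma \ref{lem3} with the role of $u$ played by $v\in V_3$ gives $N_G^I(v)\cap N_G^I(w)\neq\emptyset$, so $w$ is adjacent to at least one $v_i$. In either case $w\in N_G(v_1)\cup N_G(v_2)\cup N_G(v_3)$, establishing the inclusion.

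For the reverse inclusion I would simply use that $v_1,v_2,v_3\in I$ together with the fact that $I$ is an independent set, so that every neighbour of each $v_i$ lies in $C$; thus $N_G(v_i)\subseteq C$ for each $i$ and the union is contained in $C$. Combining the two inclusions yields the claimed equality. I do not anticipate a genuine obstacle here: the content is entirely in translating the adjacency guarantee of Lemma \ref{lem3} into membership in the union, and the only step warranting a moment's care is confirming at the outset that $G$ is a $3$-split graph so the lemma can be used at all.
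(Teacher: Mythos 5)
Your proof is correct and follows essentially the same route as the paper: both arguments reduce the statement to Lemma~\ref{lem3}, which guarantees $N_G^I(v)\cap N_G^I(u)\neq\emptyset$ for every other $u\in C$. You are in fact slightly more careful than the paper, since you explicitly verify that $G$ is a $3$-split graph (so the lemma applies) and spell out the trivial reverse inclusion, both of which the paper's proof leaves implicit.
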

\begin{proof}  By Lemma \ref{lem3}, for every  $u\in C$, $N_{G}^{I}(v)\cap N_{G}^{I}(u) \neq \emptyset$.  This implies that for every  $u\in C$,  $\{v_1,v_2,v_3\}\cap N_{G}^{I}(u)\neq\emptyset$.  It follows that $N_{G}(v_1)\cup N_{G}(v_2)\cup N_{G}(v_3)=C$.  \hfill \qed   
\end{proof}
\vspace{-8pt}
Now onwards, we investigate the Steiner tree problem on $K_{1,4}$-free split graphs.  For our discussions on Steiner tree problem, we fix the terminal set $R$ to be $I$.  Observe that $l$-split graphs for $l=1,2$ are $K_{1,4}$-free split graphs.  If $G$ is a $1$-split graph, then there does not exist a vertex $v\in C$ such that $d_{G}^I(v)\geq 2$.  Therefore,  the corresponding clique set of $I$ forms the minimum Steiner set $S$ of $G$ where $|S|=|I|$.   We shall now consider $2$-split graphs for discussions.  For a $2$-split graph  $G$, recall that the labelled graph $M$ is such that $V(M)=I $,  $ E(M)=\{\{a,b\}~|~a,b\in I$ and there exist $v\in C$ such that $\{a,b\}=N_{G}^I(v)\}$.  % Let  $P$ be a maximum matching in $M$ where $k=|P|$ and $Q\subseteq C$ be the corresponding vertex set of the matching $P$.  Clearly, $|N_{G}^I(Q)|=2|Q|=2|P|=2k$.   Let $Q'$ be the corresponding clique set of $I\backslash N_{G}^I(Q)$.  From the definition of the corresponding clique set, $|Q'|\leq |I\backslash N_{G}^I(Q)|$.   Note that, there does not exist two vertices $x,y\in I\backslash N_{G}^I(Q)$ such that $N_{G}(x)\cap N_{G}(y)\neq \emptyset$, otherwise it contradicts the maximality of $\alpha(M)$.    Since there does not exist the possibility to have two such vertices $x,y\in  I\backslash N_{G}^I(Q)$, it follows that $|Q'|= |I\backslash N_{G}^I(Q)|$.  Therefore, $|Q'|=|I|-2k$, and $I\setminus N_{G}^I(Q)\subseteq N_{G}^I(Q')$.  It follows that the set $S=Q\cup Q'$ forms a Steiner set of $G$ and $|S|=|I|-k$.  In the following lemma we prove such a Steiner set is indeed a minimum Steiner set.
The following lemma gives the cardinality of a minimum Steiner set of any $2$-split graphs.
\begin{lemma}\label{2splitalpha}
Let $G$ be a $2$-split graph, and $M$ be the labeled graph of $G$ with $\alpha(M)=k$.  Then any minimum Steiner set $S$ of $G$ is such that $|S|=|I|-k$.
\end{lemma}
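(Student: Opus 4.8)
The plan is to first convert the connectivity requirement into a covering requirement, and then to read that covering problem off the labeled graph $M$. Since $R=I$ and $I$ is independent, every Steiner set is a subset of $C$; and because $C$ is a clique, any $S\subseteq C$ already induces a connected subgraph. So I would begin by showing that $S$ is a Steiner set of $G$ if and only if $N_{G}^{I}(S)=I$, i.e. $S$ dominates every terminal: each terminal, being in the independent set $I$, must have a neighbor in $S$ to be connected, and conversely domination plus the clique property forces connectivity. Thus a minimum Steiner set is exactly a minimum $S\subseteq C$ whose $I$-neighborhoods cover $I$, where each $v\in C$ covers $d_{G}^{I}(v)\le 2$ terminals.

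For the upper bound I would exhibit a cover of size $|I|-k$. Fix a maximum matching $M^{*}$ of $M$ with $|M^{*}|=k$ and take its corresponding vertex set $V^{*}\subseteq C$; by the discussion preceding the lemma $|V^{*}|=k$, and since each labeling vertex $v$ of an edge $\{a,b\}$ satisfies $N_{G}^{I}(v)=\{a,b\}$, the set $V^{*}$ covers exactly the $2k$ matched terminals. The remaining $|I|-2k$ terminals are $M$-unmatched; as $G$ is connected each of them has a neighbor in $C$, so I add the corresponding clique set of these terminals. The union $S$ lies in $C$ and covers all of $I$, hence is a Steiner set with $|S|=k+(|I|-2k)=|I|-k$.

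The crux is the matching lower bound, which I would obtain by a charging argument. Given any Steiner set $S$, assign to each terminal $a\in I$ one vertex $c(a)\in S$ with $a\in N_{G}^{I}(c(a))$, giving a map $c\colon I\to S$ with $|c^{-1}(v)|\le d_{G}^{I}(v)\le 2$ for all $v$. Let $S_{2}=\{v\in S:|c^{-1}(v)|=2\}$. For $v\in S_{2}$ the two assigned terminals are precisely $N_{G}^{I}(v)$, so they form an edge of $M$; because $c$ is a function, the preimages are pairwise disjoint, hence $\{c^{-1}(v):v\in S_{2}\}$ is a matching of $M$ and $|S_{2}|\le\alpha(M)=k$. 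Counting $|I|=\sum_{v\in S}|c^{-1}(v)|\le 2|S_{2}|+(|S|-|S_{2}|)=|S|+|S_{2}|$ gives $|S|\ge |I|-|S_{2}|\ge |I|-k$. Together with the construction above this fixes the minimum value at exactly $|I|-k$, so every minimum Steiner set has this size.

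I expect the main obstacle to be this lower-bound step, and within it the verification that the degree-$2$ assignments genuinely constitute a valid matching of $M$: one must check both that the assigned pairs are pairwise disjoint (immediate from $c$ being a function) and that each such pair coincides with $N_{G}^{I}(v)$ and is therefore an edge of $M$ (using $\Delta_{G}^{I}=2$). Once that identification is secured, the counting inequality and the bound $|S_{2}|\le\alpha(M)$ close the argument routinely.
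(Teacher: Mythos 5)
Your proof is correct, but it takes a genuinely different route from the paper's for the key lower bound. The paper splits on whether the labeled graph $M$ is connected: in the connected case it identifies minimum Steiner sets with minimum edge covers of $M$ and invokes the Gallai identity (minimum edge cover $= |V(M)|-\alpha(M)$); in the disconnected case it gives exactly your upper-bound construction (corresponding vertex set of a maximum matching, plus the corresponding clique set of the uncovered terminals, using maximality of the matching to show no two leftover terminals share a clique neighbor), but it never explicitly argues there that no smaller Steiner set exists. You instead prove the bound $|S|\geq |I|-k$ for \emph{every} Steiner set $S$ by a direct charging argument: assign each terminal to one covering vertex of $S$, observe that the vertices receiving two terminals induce pairwise-disjoint edges of $M$ (here $\Delta_G^I\leq 2$ is what makes each such pair an edge, and disjointness is automatic since the assignment is a function), bound their number by $\alpha(M)$, and count. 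This buys you two things over the paper's treatment: the argument is self-contained (no appeal to edge-cover theory, whose correspondence with Steiner sets the paper asserts rather than proves), and it handles connected and disconnected $M$ uniformly, closing the minimality gap in the paper's disconnected case. In effect your charging argument re-proves the Gallai-type inequality in this special setting; the paper's version is shorter if one grants the cited correspondence, while yours is the more complete proof. One small point worth making explicit in your write-up: the equivalence ``$S$ is a Steiner set iff $N_G^I(S)=I$'' silently uses $|I|\geq 2$ (guaranteed since $G$ is $2$-split) so that $S\neq\emptyset$ and every terminal genuinely needs a neighbor in $S$.
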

\begin{proof}
 If $M$ is a connected graph, then the minimum Steiner set in $G$ corresponds to the minimum edge cover in $M$.  For any graph $M$ with maximum matching $P$, the cardinality of minimum edge cover is $|V(M)|-|P|$.  Therefore, a minimum Steiner set $S$ is such that $|S|=|V(M)|-|P|=|I|-k$.  If $M$ is not connected, let $C_1,C_2,\ldots,C_r$ be the components such that $C_1,C_2,\ldots,C_i, i\leq r$ are non-trivial components with at least one edge and $C_{i+1},C_{i+2},\ldots,C_r$ are trivial ones.  For components $C_1,C_2,\ldots,C_i$, we find the maximum matching $P$ where $k=|P|$ and $Q\subseteq C$ be the corresponding vertex set of the matching $P$.  Clearly, $|N_{G}^I(Q)|=2|Q|=2|P|=2k$.   Let $Q'$ be the corresponding clique set of $I\backslash N_{G}^I(Q)$.  From the definition of the corresponding clique set, $|Q'|\leq |I\backslash N_{G}^I(Q)|$.   Note that, there does not exist two vertices $x,y\in I\backslash N_{G}^I(Q)$ such that $N_{G}(x)\cap N_{G}(y)\neq \emptyset$, otherwise it contradicts the maximality of $P$.    Since there does not exist the possibility to have two such vertices $x,y\in  I\backslash N_{G}^I(Q)$, it follows that $|Q'|= |I\backslash N_{G}^I(Q)|$ and the graph induced on $V(G)\backslash N_G^I(Q)$ is a $1$-split graph.  Therefore, $|Q'|=|I|-2k$, and $I\setminus N_{G}^I(Q)\subseteq N_{G}^I(Q')$.  It follows that the set $S=Q'\cup Q$ forms a Steiner set of $G$ and $|S|=|I|-2k+k=|I|-k$.  
 \hfill \qed
 \end{proof}
\begin{lemma}
For any $2$-split graph $G$, OPT Steiner tree problem is polynomial-time solvable.
\end{lemma}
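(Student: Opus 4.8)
The plan is to observe that the proof of Lemma \ref{2splitalpha} is already entirely constructive, and that establishing polynomial-time solvability amounts to verifying that each construction step can be executed efficiently. The heart of the argument is that the optimal cardinality $|I|-k$ is witnessed by an explicit Steiner set $S = Q \cup Q'$, where $Q \subseteq C$ is the corresponding vertex set of a maximum matching $P$ of the labeled graph $M$ (with $k = \alpha(M)$), and $Q'$ is the corresponding clique set of the uncovered terminals $I \setminus N_G^I(Q)$. So the task reduces to turning that existence proof into an algorithm and bounding its running time.

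First I would construct the labeled graph $M$ from $G$. By definition $V(M) = I$, and two vertices $a,b \in I$ are joined by an edge exactly when they share a common neighbor in $C$; since $G$ is a $2$-split graph, any such common neighbor $v$ satisfies $N_G^I(v) = \{a,b\}$. Hence $M$ can be assembled by scanning each $v \in C$ with $d_G^I(v) = 2$ and recording the pair $N_G^I(v)$ together with its label, which costs time polynomial in $n = |V(G)|$. Next I would compute a maximum matching $P$ of $M$, read off the corresponding vertex set $Q \subseteq C$, and then read off the corresponding clique set $Q'$ of $I \setminus N_G^I(Q)$; the latter two are direct lookups over the adjacency structure and are clearly polynomial.

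Finally, invoking Lemma \ref{2splitalpha}, the set $S = Q \cup Q'$ is a minimum Steiner set of $G$ with $|S| = |I| - k$, and running breadth-first search on $G[S \cup R]$ with $R = I$ produces the required Steiner tree in linear time. Assembling these steps shows every phase is polynomial and that the total running time is dominated by the matching computation, which yields the claim.

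The main obstacle is conceptual rather than computational. Although the reduction lands us in a matching problem, the labeled graph $M$ need \emph{not} be bipartite, so one cannot simply appeal to bipartite matching; the point to recognize is that maximum matching in general graphs is still solvable in polynomial time via Edmonds' blossom algorithm, so no hardness is smuggled in through the reduction. Once this is noted, the bound from Lemma \ref{2splitalpha} guarantees both optimality and a polynomial construction, and the lemma follows.
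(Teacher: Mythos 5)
Your proposal is correct and follows essentially the same route as the paper: both arguments observe that the proof of Lemma \ref{2splitalpha} is constructive, build the labeled graph $M$, compute a maximum matching with a general-graph (non-bipartite) matching algorithm, extract the corresponding vertex set and corresponding clique set, and conclude optimality from Lemma \ref{2splitalpha}. Your explicit remark that $M$ need not be bipartite is a nice clarification of why the paper appeals to a general maximum matching algorithm rather than a bipartite one, but it does not change the substance of the argument.
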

\begin{proof}
Finding the labeled graph $M$ of $G$, incurs $O(n)$ effort where $n=|V(G)|$.  Maximum matching $P$ of $M$ can be found in $O(n^{\frac{3}{2}})$ time.   Note that the corresponding vertex set $Q$ of $P$ can be found in linear time.  Similarly, the corresponding clique set also can be obtained in linear time.  Therefore, the overall running time for finding the Steiner set is $O(n^{\frac{3}{2}})$ and OPT Steiner tree in any $2$-split graph is polynomial-time solvable. \hfill \qed
 \end{proof}
The following lemma characterizes a special $2$-split graph constructed from a $3$-split graph.  Particularly, Lemma \ref{lem2} gives an upper bound on the matching size of the labelled graph of a $2$-split graph.  
%We describe the graph $G^2$ as follows.  ,  $x\in C^1$ such that 
% and labelled graph of $i$-split graph, $i\leq 1$ is empty or edgeless
\begin{lemma} \label{lem2}
Let $G^1=I^1+C^1$ be a  $K_{1,4}$-free $3$-split graph.  For any  $x\in V_3$, let $G^2$ be the graph induced on $V(G^1)\backslash N^I_{G^1}(x)$, and $M$ be the labelled graph of $G^2$.  Then size of any maximum matching $\alpha(M)\leq 2$
\end{lemma}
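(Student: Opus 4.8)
The plan is to argue by contradiction: assume $\alpha(M)\geq 3$ and manufacture a clique vertex whose $I$-degree is at least $4$, contradicting the hypothesis that $G^1$ is a $3$-split graph (so $\Delta^I_{G^1}=3$). Write $N^I_{G^1}(x)=\{v_1,v_2,v_3\}$, so that $V(G^2)=V(G^1)\setminus\{v_1,v_2,v_3\}$ and $V(M)=I^2=I^1\setminus\{v_1,v_2,v_3\}$; by Corollary \ref{cor1}, $G^2$ is an $l$-split graph with $l\leq 2$, so $M$ is well defined. A matching of size $3$ in $M$ gives three pairwise vertex-disjoint edges $\{a_1,b_1\},\{a_2,b_2\},\{a_3,b_3\}$ on six distinct vertices of $I^2$, and by the definition of the labelled graph each edge $\{a_i,b_i\}$ is witnessed by a clique vertex $w_i\in C^1$ with $N^I_{G^2}(w_i)=\{a_i,b_i\}$.

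First I would pin down the full $G^1$-neighbourhood of each $w_i$. Since $N^I_{G^2}(x)=\emptyset$ we have $w_i\neq x$, and applying Lemma \ref{lem3} to the pair $x\in V_3$ and $w_i$ shows $w_i$ is adjacent to at least one of $v_1,v_2,v_3$. As $a_i,b_i\in N^I_{G^1}(w_i)$ and $\Delta^I_{G^1}=3$, this forces $N^I_{G^1}(w_i)=\{a_i,b_i,v_{f(i)}\}$ for a unique index $f(i)\in\{1,2,3\}$; in particular $w_i\in V_3$. Next, applying Lemma \ref{lem3} to each pair $w_i,w_j$ (both in $V_3$) gives $N^I_{G^1}(w_i)\cap N^I_{G^1}(w_j)\neq\emptyset$; since the pairs $\{a_i,b_i\}$ are disjoint, the only vertex that can realise this intersection is the third neighbour, so $v_{f(i)}=v_{f(j)}$. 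Hence $f(1)=f(2)=f(3)$, say all equal $1$, and $w_1,w_2,w_3$ are all adjacent to $v_1$ while their private pairs in $I^2$ remain disjoint.

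The decisive step, which I expect to be the main obstacle (the argument simply does not close without it), is to exploit the \emph{maximality} of the clique $C^1$. Because $v_1\in I^1$, maximality yields a vertex $p\in C^1$ with $v_1\notin N^I_{G^1}(p)$; note $p\notin\{x,w_1,w_2,w_3\}$ since each of those is adjacent to $v_1$. Now I apply Lemma \ref{lem3} four times, pairing $p$ against the $V_3$ vertices $x,w_1,w_2,w_3$: the pair $(x,p)$ forces $p$ adjacent to $v_2$ or $v_3$ (not $v_1$), while each pair $(w_i,p)$ forces $p$ adjacent to $a_i$ or $b_i$ (again not $v_1$). These four guaranteed neighbours lie in the pairwise disjoint sets $\{v_2,v_3\},\{a_1,b_1\},\{a_2,b_2\},\{a_3,b_3\}$, whence $d^I_{G^1}(p)\geq 4$, contradicting $\Delta^I_{G^1}=3$. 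Therefore $M$ admits no matching of size $3$, i.e. $\alpha(M)\leq 2$.
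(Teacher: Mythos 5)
Your proof is correct and follows essentially the same route as the paper's: force all three matching witnesses to be adjacent to one and the same vertex of $N^I_{G^1}(x)$, invoke maximality of $C^1$ to obtain a clique vertex non-adjacent to that vertex, and then use Lemma \ref{lem3} to overload that vertex's $I$-degree. The only differences are in execution detail: you get the common-neighbour step by applying Lemma \ref{lem3} pairwise to the witnesses (after noting each lies in $V_3$), where the paper exhibits an explicit induced $K_{1,4}$, and you close with $d^I_{G^1}(p)\geq 4$ contradicting $3$-splitness, where the paper derives $d^I_{G^2}(y)= 3$ contradicting Corollary \ref{cor1}.
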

\begin{proof}
 Recall from Corollary \ref{cor1} that, $G^2$ is a $l$-split graph for some $0\leq l\leq 2$.  On the contrary, let $\alpha(M)\geq 3$.  Let vertices $\{a,b,c,d,e,f\}\subseteq V(M)$ be those vertices participating in the matching of size at least $3$ such that $\{u,v,w\}\subseteq C^1$ and $\{a,b\}\subseteq N^I_{G^1}(u)$, $\{c,d\}\subseteq N^I_{G^1}(v)$, $\{e,f\}\subseteq N^I_{G^1}(w)$ as shown in Figure \ref{lemmamatfig}.  Clearly, from Lemma \ref{lem3}, $N_{G^1}^I(x)\cap N_{G^1}^I(u)\neq \emptyset$.  Similarly, $N_{G^1}^I(x)\cap N_{G^1}^I(v)\neq \emptyset$ and $N_{G^1}^I(x)\cap N_{G^1}^I(w)\neq \emptyset$.  We consider the following scenario.
\begin{figure}[!h]
\begin{center}
 \includegraphics[scale=1.1]{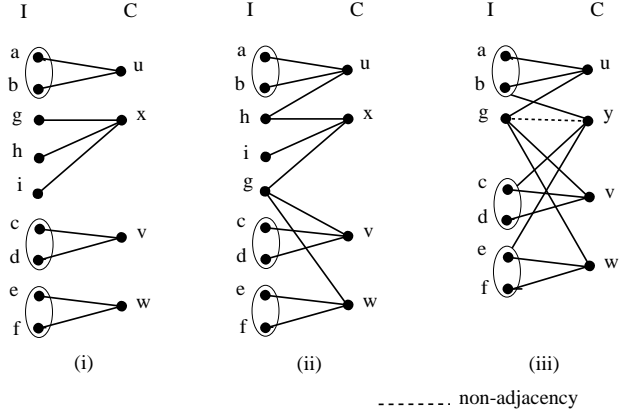}
\end{center} \vspace{-20pt}
\caption{An illustration for the proof of lemma \ref{lem2}}%
\label{lemmamatfig}%
\end{figure}
\\Suppose $\{g,v\}\in E(G^1)$ and $\{g,u\}\notin E(G^1)$.  Since $N_{G^1}^I(x)\cap N_{G^1}^I(u)\neq \emptyset$, without loss of generality, $\{h,u\}\in E(G^1)$.  Observe that $\{u,v\}\cup N_{G^1}^I(v)$ induces a $K_{1,4}$.  Therefore, $\{g,u\}\in E(G^1)$.  Similar argument holds true for $w$ and $\{g,w\}\in E(G^1)$. 
%%%%%%%%\\\textit{ Case (i):} Vertices $\{u,v,w\}$ has adjacency with same vertex, say $g$ in $N_{G^1}^I(x)$, i.e., $\{g,u\}, \{g,v\}, \{g,w\}\in E(G^{1})$ \\
%There exist at least one vertex $x\in C^1$ such that $d_{G^1}^{I}(x)=3$.  Clearly, $x$ cannot have adjacency to all three, one from each pairs $\{a,b\}$, $\{c,d\}$, $\{e,f\}$.  Therefore, there exists a vertex pair, say $\{a,b\}\subseteq I^2$ such that neither $\{a,x\}\in E(G^1)$ nor $\{b,x\}\in E(G^1)$.  Notice that vertices $\{x,u\}\cup N_{G^1}^{I}(x)$ induces a $K_{1,4}$, which is a contradiction. 
%  If $\{g,w\}\notin E(G^1)$, then there exist a $K_{1,4}$ induced on vertices $\{a,b,g,u,w\}$ as $\{a,w\}, \{b,w\}\notin E(G^1)$.  Therefore, it follows that $g$ is adjacent to all the vertices $\{u,v,w\}$.   
Since the clique $C^1$ is maximal, $g$ is not adjacent to all vertices of $C^1$, and therefore there exist $y\in C^1$ such that $\{g,y\}\notin E(G^1)$.  Clearly,  $N_{G^1}^{I}(y)\cap N_{G^1}^{I}(u)\neq\emptyset$, $N_{G^1}^{I}(y)\cap N_{G^1}^{I}(v)\neq\emptyset$ and  $N_{G^1}^{I}(y)\cap N_{G^1}^{I}(w)\neq\emptyset$.  Observe that in $G^2, d_{G^2}^{I}(y)=3$, and $G^2$ is not a $l$-split graph, $l\leq 2$.  This is a contradiction to Corollary \ref{cor1}.  
%%%%%%%%%  \\\textit{ Case (ii):} Vertices $\{u,v,w\}$ has adjacency with at least two different vertices, say $g,h$.  That is, $\{h,u\}, \{g,v\}\in E(G^1)$.\\
% Clearly, $N_{G^1}^I(x)\cap N_{G^1}^I(u)\neq \emptyset$, and without loss of generality $\{h,u\}\in E(G^1)$.  There exist a $K_{1,4}$ induced on vertices $\{a,b,h,u,v\}$.  This is a contradiction to the premise.  From the above cases, 
It follows that our assumption $\alpha(M)\geq 3$ is wrong and therefore, $\alpha(M)\leq 2$.  This completes the proof of the lemma. \hfill \qed 
\end{proof} % \vspace{-20pt}
We now present some structural observations pertaining to $3$-split graphs.  
\begin{lemma} \label{lem4}
For a $K_{1,4}$-free $3$-split graph $G^1$, any Steiner set $S$ of $G^1$ is such that $|S|\geq |I^1|-5$.
\end{lemma}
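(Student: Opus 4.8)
The plan is to reduce the $3$-split case to the already-solved $2$-split case by deleting, from a single degree-$3$ clique vertex, its three neighbours in the independent set, and then to cash in the bound $\alpha(M)\le 2$ supplied by Lemma \ref{lem2}. First I would record the elementary but decisive observation that, because the terminal set is fixed to be $R=I^1$, every Steiner set satisfies $S\subseteq V(G^1)\setminus R=C^1$; moreover, since $I^1$ is independent, connectivity of $G^1[S\cup I^1]$ forces every vertex of $I^1$ to have a neighbour in $S$ (otherwise such a vertex would be isolated in $G^1[S\cup I^1]$). Thus a Steiner set is exactly a subset of the clique $C^1$ that dominates $I^1$.

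Next I would set up the deletion. As $G^1$ is a $3$-split graph we have $V_3\neq\emptyset$, so I fix any $x\in V_3$ and let $G^2$ be the graph induced on $V(G^1)\setminus N^I_{G^1}(x)$. Only independent-set vertices are removed, so $C^2=C^1$ and $|I^2|=|I^1|-3$. By Corollary \ref{cor1}, $G^2$ is an $l$-split graph with $0\le l\le 2$, so by Lemma \ref{2splitalpha} (and the trivial observation that for $l\le 1$ the labelled graph has no edges and the minimum Steiner set has size $|I^2|$) the minimum Steiner set of $G^2$ has size $|I^2|-\alpha(M)$, where $M$ is the labelled graph of $G^2$. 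The crucial input is Lemma \ref{lem2}, which gives $\alpha(M)\le 2$.

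The core of the argument is then to show that any Steiner set $S$ of $G^1$ is also a Steiner set of $G^2$. Indeed $S\subseteq C^1=C^2$, and since $S$ dominates $I^1\supseteq I^2$ in $G^1$ using only edges into the clique $C^1$ (edges which survive the deletion of the independent-set vertices in $N^I_{G^1}(x)$), $S$ still dominates $I^2$ in $G^2$; as $S$ induces a clique, $G^2[S\cup I^2]$ is connected. Hence $|S|$ is at least the minimum Steiner number of $G^2$, and combining the three ingredients gives
\[
|S|\ \ge\ |I^2|-\alpha(M)\ \ge\ (|I^1|-3)-2\ =\ |I^1|-5,
\]
which is the claim. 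The main obstacle, and the step I would write most carefully, is precisely this inheritance of domination: one must use that Steiner vertices necessarily live in the clique $C$ and are therefore untouched by the deletion, and one must note that Lemma \ref{2splitalpha} is being invoked only in its lower-bound direction ($|S|\ge |I^2|-\alpha(M)$), which remains valid for every $l\le 2$. The degenerate cases $I^2=\emptyset$ (i.e.\ $|I^1|=3$) and $S=\emptyset$ make the inequality vacuously or trivially true.
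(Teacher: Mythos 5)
Your proof is correct and follows essentially the same route as the paper's: fix $x\in V_3$, delete $N^I_{G^1}(x)$, apply Corollary \ref{cor1} and Lemma \ref{lem2} to get an $l$-split graph ($l\leq 2$) whose labelled graph has matching number at most $2$, and conclude $|S|\geq |I^2|-2=|I^1|-5$. If anything, you are more explicit than the paper at two points: you justify why any Steiner set of $G^1$ remains a Steiner set of $G^2$ (the step the paper asserts as ``$|S|\geq|S^2|$'' without proof), and you invoke Lemma \ref{2splitalpha} directly where the paper re-derives the matching count inline.
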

\begin{proof}
Observe that $V_3\neq\emptyset$ as $G^1$ is $3$-split.  For any $v\in V_3$, $G^2$ is the graph induced on $V(G^1)\backslash N^I_{G^1}(v)$.  By Corollary \ref{cor1}, $G^{2}$ is a $l$-split graph, $l\leq 2$.  Let $S^2$ be the minimum Steiner set of $G^2$ such that $N_{G^2}^I(S^2)=I^2$ and $|I^2|=|I^1|-3$.  If $M$ is the labeled graph of $G^2$, then by Lemma \ref{lem2}, $\alpha(M)\leq 2$.  Let $\{\{a,b\}, \{c,d\}\}\subseteq E(M)$ be the matching edges of a maximum matching in $M$.  Observe that there exist two vertices $v_1,v_2\in C^2$ such that $N_{G^2}^I(v_1)=\{a,b\}$, $N_{G^2}^I(v_2)=\{c,d\}$.  Notice that for each vertex $w\in I^2\backslash \{a,b,c,d\}$, there exist a vertex  $u\in C^2\backslash \{v_1,v_2\}$ in $S^2$ such that $\{w,u\}\in E(G^2)$.  The graph induced on $V(G^2)\backslash \{a,b,c,d\}$ is a $1$-split graph,  $|S^2|\geq |I^2|-4+2$ and it follows that $|S^2|\geq |I^1|-3-2=|I^1|-5$.   It can be concluded that $|S|\geq |I^1|-5$ as $|S|\geq |S^2|$.  This completes the proof of the lemma.  \hfill \qed
%  Observe that $V(M)\backslash \{a,b,c,d\}$ is an edgeless graph.   We recall the fact that $V(M)=I^2$ and therefore, 
\end{proof}
We below characterize $K_{1,4}$-free $3$-split graphs based on the cardinality of a minimum Steiner set.  In particular, in Theorem \ref{vi-4}, we characterize  $K_{1,4}$-free split graphs whose minimum Steiner set is $|I^1|-4$, and in Theorem \ref{vi-3}, we characterize  $K_{1,4}$-free split graphs whose minimum Steiner set is $|I^1|-3$.  To present Theorem \ref{maxn-4} to Theorem \ref{bound}, we fix the following notation.  Let $G^1=I^1+C^1$ be a  $K_{1,4}$-free $3$-split graph.  For any  $u\in V_3$, let $G^2=I^2+C^2$ be the graph induced on $V(G^1)\backslash N^I_{G^1}(u)$, and $M$ be the labelled graph of $G^2$.  In Theorem \ref{maxn-4}, we present a stronger result of Lemma \ref{lem4}.   

\begin{theorem} \label{maxn-4}
 For a $K_{1,4}$-free $3$-split graph $G^1$, any minimum Steiner set $S$ of $G^1$ is such that $|S|\geq|I^1|-4$.
\end{theorem}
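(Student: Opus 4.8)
The plan is to sharpen Lemma \ref{lem4} by exactly one, arguing by contradiction: assume there is a minimum Steiner set $S$ with $|S| = |I^1| - 5$. Being minimum, $S$ is inclusion-minimal, so every $v \in S$ owns a private terminal covered by no other vertex of $S$. As $G^1$ is $3$-split, I fix $u \in V_3$ with $N^I_{G^1}(u) = \{a_1,a_2,a_3\}$ and let $G^2$ be the graph induced on $V(G^1) \setminus N^I_{G^1}(u)$; by Corollary \ref{cor1} it is $l$-split with $l \le 2$, its labelled graph $M$ satisfies $\alpha(M) \le 2$ by Lemma \ref{lem2}, and by Lemma \ref{lem3} every vertex of $C^1$ — in particular every vertex of $S$ — meets $\{a_1,a_2,a_3\}$.

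First I would split $S$ into $S' = \{v \in S : N^I_{G^1}(v) \cap I^2 \neq \emptyset\}$ and $S'' = S \setminus S'$, the vertices whose independent neighbours lie entirely in $\{a_1,a_2,a_3\}$. Since $S'$ covers $I^2$ in $G^2$ using sets of size at most $2$, Lemma \ref{2splitalpha} gives $|S'| \ge |I^2| - \alpha(M) \ge |I^2| - 2 = |I^1| - 5$. Hence if $S'' \neq \emptyset$ then $|S| = |S'| + |S''| \ge |I^1| - 4$, already contradicting $|S| = |I^1|-5$. So the whole difficulty is concentrated in the case $S'' = \emptyset$, where $S$ is simultaneously a minimum Steiner set of $G^2$; this forces $\alpha(M) = 2$ and an essentially partition-like minimum cover of $I^2$ with two ``double'' vertices $v_1, v_2$ realizing the maximum matching, say $N^I_{G^2}(v_1) = \{b_1,b_2\}$ and $N^I_{G^2}(v_2) = \{c_1,c_2\}$ with $\{b_1,b_2\} \cap \{c_1,c_2\} = \emptyset$, the remaining terminals of $I^2$ being covered singly.

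Next I would lift this structure back to $G^1$ using Lemma \ref{lem3}. Each of $v_1,v_2$ meets $\{a_1,a_2,a_3\}$, so each gains exactly one $a$ as a third neighbour and lies in $V_3$; applying Lemma \ref{lem3} to the pair $v_1,v_2$, whose $I^2$-parts are disjoint, forces them to share the \emph{same} $a$, say $N^I_{G^1}(v_1) = \{b_1,b_2,a_1\}$ and $N^I_{G^1}(v_2) = \{c_1,c_2,a_1\}$. The vertices of $S$ covering $a_2$ and $a_3$ are distinct from $v_1,v_2$, and testing them against $v_1$ and $v_2$ through Lemma \ref{lem3} again pins them down as members of $V_3$ of the form $\{w,a_1,a_2\}$ and $\{w',a_1,a_3\}$ with $w,w',b_1,b_2,c_1,c_2$ pairwise distinct. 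The finishing blow is the maximality of $C^1$: since $a_1 \in I^1$, the set $C^1 \cup \{a_1\}$ cannot be a clique, so some $z \in C^1$ is non-adjacent to $a_1$. By Lemma \ref{lem3}, $z$ must meet each of the four neighbourhoods $\{b_1,b_2,a_1\}$, $\{c_1,c_2,a_1\}$, $\{w,a_1,a_2\}$, $\{w',a_1,a_3\}$; avoiding $a_1$, it is adjacent to one vertex of each of the four pairwise-disjoint pairs $\{b_1,b_2\}$, $\{c_1,c_2\}$, $\{w,a_2\}$, $\{w',a_3\}$, whence $d^I_{G^1}(z) \ge 4$, contradicting $\Delta^I_{G^1} = 3$.

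I expect the main obstacle to be the $S''=\emptyset$ bookkeeping, specifically justifying that the minimum cover of $I^2$ really does reduce to two disjoint double-vertices plus singletons, and that the $a_2,a_3$-coverers acquire $a_1$ together with fresh $I^2$-neighbours. Overlapping double-coverage inside $I^2$ is possible a priori even when $\alpha(M)=2$, and it is precisely the disjointness of the four pairs that makes the final degree count reach $4$, so such overlaps must be excluded or absorbed before the maximality step applies. Once the neighbourhoods are pinned down, the maximality-of-clique observation and the Lemma \ref{lem3} degree argument are short; the proof therefore hinges on making this structural normalization in the $S''=\emptyset$ case fully rigorous.
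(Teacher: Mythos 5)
Your reduction to the case $S''=\emptyset$ is sound (the $S''\neq\emptyset$ half, via Lemma \ref{2splitalpha} and Lemma \ref{lem2}, works), but the ``structural normalization'' you flag at the end is a genuine hole, not a technicality, and it sits exactly where the contradiction must come from. First, a minimum cover of $I^2$ of size $|I^2|-2$ need \emph{not} consist of two disjoint doubles plus singletons: for instance $S$ could contain three degree-2 vertices with $I^2$-neighbourhoods $\{x_1,x_2\}$, $\{x_2,x_3\}$, $\{x_4,x_5\}$ together with singletons; this has cardinality $|I^2|-2$, is inclusion-minimal, and is consistent with $\alpha(M)=2$. Second, even granting the normalization, the coverers of $a_2$ and $a_3$ need not contain $a_1$: Lemma \ref{lem3} only forces $N^I_{G^1}(s)\cap\{b_1,b_2,a_1\}\neq\emptyset$ and $N^I_{G^1}(s)\cap\{c_1,c_2,a_1\}\neq\emptyset$, which a coverer $s$ of $a_2$ can satisfy with $N^I_{G^1}(s)=\{b_1,c_1,a_2\}$, avoiding $a_1$ entirely. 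In either situation your four pairwise-disjoint pairs collapse into overlapping sets, a vertex $z$ non-adjacent to $a_1$ can meet all the required neighbourhoods with two or three $I$-neighbours, and the final degree count never reaches $4$. Since $S$ is a \emph{given} minimum Steiner set of $G^1$ (you cannot swap it for a nicely structured minimum cover of $I^2$, because the replacement must still cover $\{a_1,a_2,a_3\}$), these configurations must be excluded, and your argument does not do so.

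The paper's proof shows how to avoid this entirely: choose $u$ inside $S$, not arbitrarily in $V_3$. It first rules out $S\cap V_3=\emptyset$ (then every $z\in S$ has $d^I_{G^1}(z)\leq 2$ and meets $N^I_{G^1}(u)$ by Lemma \ref{lem3}, hence covers at most one vertex of $I^2$, forcing $|S|\geq |I^1|-3$). Then, taking $u\in S\cap V_3$, the set $S'=S\setminus\{u\}$ has size at most $|I^1|-6$ yet must cover $I^2$, of size $|I^1|-3$: an excess of $3$. Since the excess of any cover by vertices of $I$-degree at most $2$ is bounded by the maximum number of pairwise-disjoint degree-2 neighbourhoods it contains, $S'$ contains three vertices whose $G^2$-neighbourhoods are pairwise disjoint pairs, i.e.\ a matching of size $3$ in $M$, contradicting Lemma \ref{lem2} at once. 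Your choice of an arbitrary $u\in V_3$ yields excess only $2$, which is perfectly consistent with $\alpha(M)\leq 2$; that lost unit of slack is precisely what pushes you into the unproven normalization. If you re-run your own excess-counting with $u$ chosen in $S\cap V_3$ (after a short argument that $S\cap V_3\neq\emptyset$), the proof closes in a few lines.
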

\begin{proof}
On the contrary assume that there exist a minimum Steiner set $S\subseteq C^1$ such that $|S|\leq |I^1|-5$.  \\
Suppose that $S\cap V_3=\emptyset$.  Note that $V_3\neq\emptyset$, say $u\in V_3$ and for every vertex $z\in S$, $N_{G^1}^{I}(z)\cap N_{G^1}^{I}(u)\neq\emptyset$ as per Lemma \ref{lem3}.  i.e., for every $z\in S$, there exist an edge $\{z,i\}\in E(G^1)$, where $i\in N_{G^1}^{I}(u)$.  The graph $G^{2}=I^2+C^{2}$  induced on $V(G^1)\backslash N_{G^1}^{I}(u)$ is a $l$-split graph, $l\leq 2$ by Corollary \ref{cor1}.   Consider the Steiner set $S^2\subseteq C^2$ of $G^2$ such that $N_{G^2}^I(S^2)=I^2$.  Note that $|S^2|=|I^2|$ as $G^2$ is a $l$-split graph, $l\leq 2$ and for each vertex $w\in S^2$, $N_{G^1}^I(w)\cap N_{G^1}^I(u)\neq \emptyset$.  Notice that $|I^2|=|I^1|-3$ and  $|S|\geq |S^2|$ implies that $|S|\geq |I^1|-3$.  This shows that $S\cap V_3=\emptyset$ is not possible.  \\
Next we shall consider the scenario $S\cap V_3\neq\emptyset$.   Consider the $l$-split graph, $l\leq 2$ $G^{2}$ induced on $V(G^1)\backslash N_{G^1}^{I}(u)$ where $u\in S\cap V_3$.  Let the labeled graph of $G^2$ be $M$.  For $S^{'}=S\backslash \{u\}$ and $V^{'}=I^1\backslash N_{G^1}^I(u)$ note that $N^{I}_{G^2}(S^{'})=V^{'}$.  Clearly, $|S^{'}|=|S|-1\leq |I^1|-5-1$ and $|V^{'}|=|I^1|-|N_{G^1}^{I}(u)| = |I^1|-3$.   We now claim that there exist at least $3$ vertices say  $\{v_1,v_2,v_3\}\subseteq S^{'}$ such that $d^{I}_{G^2}(v_i)=2$, $i=1,2,3$ and $N^{I}_{G^2}(v_i)\cap N^{I}_{G^2}(v_j)=\emptyset$, $1\leq i\neq j\leq 3$.  Suppose if there exist at most two vertices $v_1,v_2\in S^{'}$ such that $d_{G^2}^I(v_1)=d_{G^2}^I(v_2)=2$ and $N^{I}_{G^2}(v_1)\cap N^{I}_{G^2}(v_2)=\emptyset$, then observe that $|V^{'}|\leq |S^{'}|+2$.  It follows that $|V^{'}|\leq |I^1|-6+2= |I^1|-4$, which is a contradiction as $|V^{'}|$ is $|I^1|-3$.  Therefore, there exist at least $3$ vertices $v_1,v_2,v_3\in S^{'}$ such that $d^{I}_{G^2}(v_i)=2$, $i=1,2,3$ and $N^{I}_{G^2}(v_i)\cap N^{I}_{G^2}(v_j)=\emptyset$, $1\leq i\neq j\leq 3$.   Consider the labeled graph $M$ of $G^2$.  There exist $\{a,b,c,d,e,f\}\subseteq I^2$ such that $\{a,b\}=N_{G^2}^I(v_1)$, $\{c,d\}=N_{G^2}^I(v_2)$, $\{e,f\}=N_{G^2}^I(v_3)$.  It follows that $\{a,b\}$, $\{c,d\}$, $\{e,f\}$ forms a matching of size $3$ in $M$ which is a contradiction to Lemma \ref{lem2}.  Therefore our assumption is wrong and $|S|\geq |I^1|-4$.  This completes the proof. \hfill\qed
\end{proof}
\vspace{-5pt}
We show in Theorem \ref{vi-4} that the lower bound in Theorem \ref{maxn-4} is tight.   
\vspace{-2pt}
\begin{theorem} \label{vi-4}
For any minimum Steiner set $S$ of $G^1$, $|S|=|I^1|-4$ if and only if $\alpha(M)=2$.
%(ii) $|S|=|I|-3$ if and only if there exist $u\in V_3$ such that $\alpha(M)=1$, where $M$ is the labelled graph of $G^2$, $G^2$ is the induced graph on vertex set $V(G^1)\backslash N_{G^1}^I(u)$. 
%If $S\subseteq C^1$ is the minimum Steiner set, then $|S|\geq |I^1|-4$.
\end{theorem}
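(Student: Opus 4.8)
The plan is to prove the two implications separately, using the universal lower bound of Theorem~\ref{maxn-4} (every minimum Steiner set satisfies $|S| \geq |I^1| - 4$) together with Lemma~\ref{lem2} ($\alpha(M) \leq 2$) as the two anchors, and supplying the missing halves by an explicit construction and a short counting/exchange argument respectively. Throughout, recall that the terminal set is $R = I^1$, so a Steiner set $S \subseteq C^1$ is valid precisely when every vertex of $I^1$ has a neighbour in $S$ (the clique $C^1$ takes care of connectivity automatically).

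\emph{Sufficiency} ($\alpha(M) = 2 \Rightarrow |S| = |I^1| - 4$). First I would build a Steiner set of size exactly $|I^1| - 4$ and then quote Theorem~\ref{maxn-4} for the reverse inequality. Since $\alpha(M) = 2$, the labelled graph $M$ carries a matching $\{\{a,b\},\{c,d\}\}$, so $G^2$ is genuinely a $2$-split graph and Lemma~\ref{2splitalpha} produces a Steiner set $S^2 \subseteq C^2$ of $G^2$ with $|S^2| = |I^2| - \alpha(M) = (|I^1|-3) - 2 = |I^1| - 5$ that covers all of $I^2$. Setting $S = \{u\} \cup S^2$, the vertex $u$ covers $N^I_{G^1}(u) = I^1 \setminus I^2$ while $S^2$ covers $I^2$; as $S \subseteq C^1$, the graph $G^1[S \cup I^1]$ is connected, so $S$ is a Steiner set with $|S| = 1 + (|I^1|-5) = |I^1| - 4$. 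Combining this upper bound with Theorem~\ref{maxn-4} gives the minimum value exactly $|I^1| - 4$.

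\emph{Necessity} ($|S| = |I^1| - 4 \Rightarrow \alpha(M) = 2$). Let $S$ be a minimum Steiner set with $|S| = |I^1| - 4$. Since $|S| < |I^1| - 3$, the first case of the proof of Theorem~\ref{maxn-4} (which shows $S \cap V_3 = \emptyset$ forces $|S| \geq |I^1| - 3$) rules out $S \cap V_3 = \emptyset$, so $S$ must contain a $V_3$-vertex. Taking such a vertex to be $u$ and forming $G^2$ on $V(G^1)\setminus N^I_{G^1}(u)$, the set $S' = S \setminus \{u\}$ has size $|I^1| - 5$; moreover every vertex of $I^2 = I^1\setminus N^I_{G^1}(u)$ is non-adjacent to $u$, so its neighbour in $S$ lies in $S'$, and hence $S'$ is a Steiner set of the $l$-split graph $G^2$ ($l \leq 2$ by Corollary~\ref{cor1}). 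Lemma~\ref{2splitalpha} then yields $|S'| \geq |I^2| - \alpha(M)$, that is, $|I^1| - 5 \geq (|I^1| - 3) - \alpha(M)$, so $\alpha(M) \geq 2$; with Lemma~\ref{lem2} this forces $\alpha(M) = 2$.

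The hard part will be the bookkeeping around the choice of the $V_3$-vertex $u$, which is what fixes $G^2$ and $M$. In the sufficiency direction I construct $S$ around the prescribed $u$, but in the necessity direction the counting only delivers $\alpha(M) \geq 2$ for a $u$ lying inside the minimum Steiner set, so I would need to read the statement with $u \in S \cap V_3$ (or else verify separately that $\alpha(M)$ is independent of the choice of $u \in V_3$ for a fixed $K_{1,4}$-free $3$-split graph). The two routine-but-essential checks I expect to spell out are that removing $u$ never disconnects a terminal of $I^2$ — which holds exactly because $I^2$ avoids $N^I_{G^1}(u)$ — and that Lemma~\ref{2splitalpha} is applied correctly, including the degenerate situation where $G^2$ collapses to a $1$-split graph.
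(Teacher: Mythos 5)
Your proposal is correct and takes essentially the same route as the paper: both directions pivot on choosing $u \in S \cap V_3$ (after ruling out $S \cap V_3 = \emptyset$ via the first case of Theorem~\ref{maxn-4}), passing to $G^2$ and its labelled graph $M$, capping the matching with Lemma~\ref{lem2}, and exhibiting a Steiner set of size $|I^1|-4$ for sufficiency. Your two refinements --- invoking Lemma~\ref{2splitalpha} to justify $\alpha(M)\geq 2$ where the paper merely asserts that a matching of size $2$ exists, and explicitly flagging that the necessity argument fixes $u$ inside $S \cap V_3$ rather than an arbitrary $u \in V_3$ --- are tightenings of the same argument, not a different method.
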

\begin{proof}
\emph{Necessity:}  % Since $G^1$ is $3$-split, $|V_3|\geq 1$ and there exist $u\in V_3$.  
If $S\cap V_3=\emptyset$, then for every vertex $z\in S$, $N_{G^1}^{I}(z)\cap N_{G^1}^{I}(u)\neq\emptyset$.  i.e., for every $z\in S$, there exist an edge $\{z,i\}\in E(G^1)$, where $i\in N_{G^1}^{I}(u)$.  Similar to the proof of Theorem \ref{maxn-4}, it follows that $|S|\geq |I^1|-3$.  Therefore, $S\cap V_3\neq\emptyset$.  Let $u\in S\cap V_3$ and the graph $G^2$ induced on vertex set $V(G^1)\backslash N_{G^1}^I(u)$ is a $l$-split graph, $l\leq 2$ by Corollary \ref{cor1}.  Let $S^{'}=S\backslash \{u\}$.  Clearly in $N_{G^2}^I(S^{'})=I^2$.  Note that $|S^{'}|=|I^1|-5$ and $|I^2|=|I^1|-3$.  This implies that there exist a matching of size at least $2$ in $M$.  From Lemma \ref{lem2}, $\alpha(M)\leq 2$.  Therefore, $\alpha(M)=2$. \\
\emph{Sufficiency:} Let $\{a,b\}, \{c,d\}\in E(M)$ be the edges that form the matching of size $2$ such that label($\{a,b\})=v_{ab}$ and label($\{c,d\})=w_{cd}$.  Clearly, $d_{G^1}^I(u)=d_{G^1}^I(v)=d_{G^1}^I(w)=3$ and $|N_{G^1}^I(X)|=7$, where $X=\{u,v,w\}$.   Let $Y$ be the corresponding clique set of $I^1\backslash N_{G^1}^I(X)$.  Observe that $X\cup Y$ forms a Steiner set of $G^1$, $|Y|=|I^1|-7$ and  $|X\cup Y|=|I^1|-4$.  This completes the proof.  \hfill\qed
\end{proof}
\vspace{-7pt}
Apart from the labelled graph $M$, we make use of one more labelled graph in Theorem \ref{vi-3}, which is defined as follows.  $H^2=I^2_H+C^2_H$ is the $l$-split graph, $l\leq 2$ induced on the vertex set $V(G^1)\backslash V_3$, and $M^2$ is the labeled graph of $H^2$.  Note that the two labeled graphs $M$ and $M^2$, are constructed differently.  $M$ is constructed on the vertex set $V(M)=I^1\backslash N_{G^1}^I(u)$ whereas $M^2$ is the labeled graph on $V(M^2)=I^1$.  We fix $S\subseteq C^1$ to be a minimum Steiner set of $G^1$.  The following theorem characterizes $K_{1,4}$-free $3$-split graphs with $|S|=|I^1|-3$. 
\vspace{-2pt}
\begin{theorem} \label{vi-3}
$|S|=|I^1|-3$ if and only if one of the following is true.  \\
% \vspace{-4pt}
% \begin{enumerate} [(i)]
1. $S\cap V_3\neq\emptyset$ and $\alpha(M)=1$. \\
2. $S\cap V_3=\emptyset$ and $\alpha(M^2)=3$. 
%\item \begin{enumerate} [(a)]
%\item For every $w\in V_3$, such that $G^2$ is the induced graph on vertex set $V(G^1)\backslash N_{G^1}^I(w)$, $M^1$ is the labelled graph of $G^2$ and $\alpha(M^1)=0$, and 
 % \end{enumerate}
 % \end{enumerate}
\end{theorem}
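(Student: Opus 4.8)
The plan is to prove the biconditional by splitting on whether the fixed minimum Steiner set $S$ meets $V_3$, since the two cases correspond exactly to the two listed conditions. The tools I would use are already available: Lemma \ref{2splitalpha} (for an $l$-split graph with $l\leq 2$, a minimum Steiner set has size $|I|$ minus the matching number of its labelled graph), Lemma \ref{lem2} ($\alpha(M)\leq 2$), Theorem \ref{maxn-4} ($|S|\geq|I^1|-4$), and Theorem \ref{vi-4} ($|S|=|I^1|-4\iff\alpha(M)=2$). Throughout I rely on the reduction already set up: for $u\in V_3$, deleting $N^I_{G^1}(u)$ leaves the clique untouched, so $C^2=C^1$, $|I^2|=|I^1|-3$, $G^2$ is $l$-split with $l\leq2$ (Corollary \ref{cor1}), and any clique vertex covering a vertex of $I^2$ does so in both $G^1$ and $G^2$.

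\textbf{Forward direction.} Assume $|S|=|I^1|-3$. If $S\cap V_3\neq\emptyset$, I would pick $u\in S\cap V_3$ and build $G^2,M$ from this $u$. Since $u$ covers none of $I^2$, the set $S\setminus\{u\}$ is a Steiner set of $G^2$, so Lemma \ref{2splitalpha} gives $|S|-1\geq|I^2|-\alpha(M)$, i.e. $|S|\geq|I^1|-2-\alpha(M)$. If $\alpha(M)=0$ this forces $|S|\geq|I^1|-2$, a contradiction; if $\alpha(M)=2$ then Theorem \ref{vi-4} forces $|S|=|I^1|-4$, again a contradiction. As $\alpha(M)\leq2$ by Lemma \ref{lem2}, we get $\alpha(M)=1$, which is Condition $1$. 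If instead $S\cap V_3=\emptyset$, then $S$ covers $I^1=I^2_H$ using only vertices of $C^2_H=C^1\setminus V_3$, so $S$ is a Steiner set of $H^2$ and the minimum Steiner set size of $H^2$ is at most $|S|$; conversely any Steiner set of $H^2$ is a Steiner set of $G^1$, so $|S|=\mathrm{minSteiner}(G^1)\leq\mathrm{minSteiner}(H^2)$. Hence $\mathrm{minSteiner}(H^2)=|S|=|I^1|-3$, and Lemma \ref{2splitalpha} applied to $H^2$ (whose labelled graph $M^2$ has vertex set $I^1$) yields $\alpha(M^2)=3$, which is Condition $2$.

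\textbf{Backward direction.} Under Condition $1$, pick $u\in S\cap V_3$; a minimum Steiner set of $G^2$ has size $|I^2|-\alpha(M)=|I^1|-4$, and adjoining $u$ gives a Steiner set of $G^1$ of size $|I^1|-3$, so $|S|\leq|I^1|-3$. Theorem \ref{maxn-4} gives $|S|\geq|I^1|-4$, while $\alpha(M)=1\neq2$ with Theorem \ref{vi-4} rules out $|S|=|I^1|-4$; hence $|S|=|I^1|-3$. Under Condition $2$, a minimum Steiner set of $H^2$ has size $|I^1|-\alpha(M^2)=|I^1|-3$ and is a Steiner set of $G^1$, so $|S|\leq|I^1|-3$. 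For the matching lower bound I would fix any $u\in V_3$ (nonempty since $G^1$ is $3$-split); as $S\cap V_3=\emptyset$, every $z\in S$ satisfies $d^I_{G^1}(z)\leq2$ and, by Lemma \ref{lem3}, meets $N^I_{G^1}(u)$, so $d^I_{G^2}(z)\leq1$. Thus each vertex of $S$ covers at most one vertex of $I^2$, and since $S$ must cover all of $I^2$ we get $|S|\geq|I^2|=|I^1|-3$, giving equality.

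\textbf{Main obstacle.} The delicate point is the dependence of $M$ on the choice of $u\in V_3$: to make Condition $1$ meaningful I would consistently take $u\in S\cap V_3$, exactly as in the proof of Theorem \ref{vi-4}, so that the $\alpha(M)$ used here is the same invariant invoked there. A second point to treat carefully is that $G^2$ and $H^2$ may be $0$- or $1$-split rather than genuinely $2$-split; here I would observe that the edge-cover/matching identity of Lemma \ref{2splitalpha} degenerates correctly (matching number $0$, minimum Steiner set $|I|$) in those cases, so the size formula $|I|-\alpha$ stays valid throughout the argument.
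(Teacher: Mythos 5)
Your proposal is correct and follows essentially the same route as the paper: the same case split on $S\cap V_3$, the same auxiliary graphs $G^2$, $H^2$ with labelled graphs $M$, $M^2$, and the same reliance on Lemma \ref{2splitalpha}, Lemma \ref{lem2}, Theorem \ref{maxn-4} and Theorem \ref{vi-4}. If anything, you are more careful than the paper in places it leaves implicit — e.g.\ explicitly ruling out $|S|=|I^1|-4$ in the backward direction and justifying why minimality transfers between $G^1$ and $H^2$ — but these are refinements of the same argument, not a different one.
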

\begin{proof}
%(ii) $|S|=|I|-3$ if and only if there exist $u\in V_3$ such that $\alpha(M)=1$, where $M$ is the labelled graph of $G^2$, $G^2$ is the induced graph on vertex set $V(G^1)\backslash N_{G^1}^I(u)$. 
%If $S\subseteq C^1$ is the minimum Steiner set, then $|S|\geq |I^1|-4$.
\emph{Necessity:} If $|S|=|I^1|-3$ then we come across the following two cases.\\
\emph{Case 1:} $S\cap V_3\neq\emptyset$.\\
Let $u\in S\cap V_3$  and $S^{'}=S\backslash \{u\}$.  Observe that, in $G^2$, the $l$-split graph, $l\leq 2$ induced on the vertex set $V(G^1)\backslash N_{G^1}^I(u)$, $N_{G^2}^{I}(S^{'})=I^2$. i.e., $|S^{'}|=|S|-1=|I^1|-3-1$ and $|I^2|=|I^1|-3$.  This implies that there exist a matching of size at least $1$ in $M$.  By Lemma \ref{lem2}, $\alpha(M)\leq 2$.   Suppose $\alpha(M)= 2$, then by Theorem \ref{vi-4}, $|S|=|I^1|-4$.  However, we know that $|S|=|I^1|-3$ and therefore $\alpha(M)\neq 2$.  We can therefore conclude that $\alpha(M)\leq 1$.  If $\alpha(M)=0$, then since $G^1$ is connected and $K_{1,4}$-free, $|I^2|=3$.  In this case, $S=\{u\}, |S|=|I^1|-2$.  Therefore, it follows that  $\alpha(M)=1$.\\
\emph{Case 2:} $ S\cap V_3=\emptyset$.\\
% Note that $N_{G^1}^I(S)=I^1$,  $|S|=|I^1|-3$ and $M^2$ be the labeled graph of $H^2=I_{H}^{2}+C_{H}^2$.   \\
%  \emph{Case a:} On the contrary assume that  $\alpha(M^2)\geq 4$.  Let $\{\{a,b\}, \{c,d\}, \{e,f\}, \{g,h\}\}\subseteq E(M^2)$ be the edges which forms a matching of size at least four where $\{a,b,c,d,e,f,g,h\}\subseteq V(I_{H}^2)$ and $N_{H^2}^{I}(v_1)=\{a,b\}$, $N_{H^2}^{I}(v_2)=\{c,d\}$, $N_{H^2}^{I}(v_3)=\{e,f\}$, $N_{H^2}^{I}(v_4)=\{g,h\}$ where $\{v_1,\ldots v_4\}\subseteq V(C_{H}^2)$.   Recall that in $G^1$, $|V_3|\geq1$ say  $u\in V_3$ and $N_{G^1}^{I}(u)\cap N_{G^1}^{I}(v_i)\neq \emptyset$, $1\leq i\leq 4$. This implies that $d_{G^1}^I(u)\geq 4$, which is a contradiction. \\  \emph{Case b:} 
Since $S\cap V_3=\emptyset$, $S$ is a minimum Steiner set in $G^1$ and since $H^2$ is the induced on the vertex set $V(G^1)\backslash V_3$, $S$ is also a minimum Steiner set in $H^2$.   Observe that if $\alpha(M^2)=k$, then the size of the minimum Steiner set in $H^2$ is $|V(I_{H}^{2})|-k$ by Lemma \ref{2splitalpha}.  Since $V(I_{H}^{2})=I^1$ we can conclude that $\alpha(M^2)=3$.
%Since $G^3$ is a $2$-split graph, there exist a matching of size at least $3$ in $G^3$.  Note that if $\alpha(M^2)>3$, then $|S|<|I^1|-3$, which is not possible by Lemma \ref{vi-4}.  
\\
\emph{Sufficiency:} 
\emph{Case 1:}  $S\cap V_3\neq\emptyset$ and $\alpha(M)=1$\\
Let $u\in S\cap V_3$ and $\{a,b\}\in E(M)$ be the edge that forms the matching of size $1$, such that label($\{a,b\}$)=$v_{ab}$.  Clearly, $|N_{G^1}^I(X)|=5$, where $X=\{u,v\}$.   Let $Y$ be the corresponding clique set of $I^1\backslash N_{G^1}^I(X)$.
% For each vertex $y\in I^1\backslash N_{G^1}^I(X)$ $\}$ include exaclty one vertex $z\in C^1$ in $Y$ such that $\{y,z\}\in E(G^1)$.  
 Observe that $S=X\cup Y$ forms a Steiner set of $G^1$, $|Y|=|I^1|-5$ and  $|S|=|X\cup Y|=|I^1|-3$.  \\
\emph{Case 2:} $S\cap V_3=\emptyset$ and  $\alpha(M^2)=3$\\
Let $\{a,b\}, \{c,d\}, \{e,f\}\in E(M^2)$ be the edges  that form the matching of size $3$, such that label($\{a,b\}$)=$v_{ab}$,  label($\{c,d\}$)=$w_{cd}$,  label($\{e,f\}$)=$x_{ef}$.   Clearly, $|N_{G^1}^I(Y)|=6$, where $Y=\{v,w,x\}$.   Let $Z$ be the corresponding clique set of $I^1\backslash N_{G^1}^I(Y)$.  Observe that $S=Y\cup Z$ forms a Steiner set of $G^1$, $|Z|=|I^1|-6$ and  $|S|=|Y\cup Z|=|I^1|-3$.   This completes the proof.  \hfill\qed 
\end{proof}
\vspace{-5pt}
% Recall, $S$ is a minimum Steiner set of $G^1$.
\begin{theorem} \label{bound}
$|I^1|-4\leq |S|\leq |I^1|-2$.
\end{theorem}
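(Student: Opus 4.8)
\textbf{Proof proposal for Theorem \ref{bound}.}

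The plan is to establish the two inequalities separately, drawing almost entirely on results already proved. The lower bound $|I^1|-4\leq|S|$ is simply a restatement of Theorem \ref{maxn-4}, which asserts that any minimum Steiner set $S$ of a $K_{1,4}$-free $3$-split graph $G^1$ satisfies $|S|\geq|I^1|-4$. So the entire content of the theorem lies in the upper bound $|S|\leq|I^1|-2$, and that is where I would focus the argument.

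For the upper bound, the approach is to exhibit an explicit Steiner set of size at most $|I^1|-2$, which then forces any \emph{minimum} Steiner set $S$ to satisfy $|S|\leq|I^1|-2$. Since $G^1$ is $3$-split, $V_3\neq\emptyset$; fix some $u\in V_3$ with $N_{G^1}^I(u)=\{u_1,u_2,u_3\}$. By Corollary \ref{cor2}, $N_{G^1}(u_1)\cup N_{G^1}(u_2)\cup N_{G^1}(u_3)=C^1$, which tells us that the single vertex $u$ already ``covers'' three terminals, namely $u_1,u_2,u_3$. First I would form $G^2=I^2+C^2$, the $l$-split graph ($l\leq 2$) induced on $V(G^1)\backslash N_{G^1}^I(u)$ guaranteed by Corollary \ref{cor1}, where $|I^2|=|I^1|-3$. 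Then I would take the corresponding clique set $Y$ of $I^2$: by definition each of the $|I^2|$ remaining terminals gets one neighbour in $C^2$, so $|Y|\leq|I^2|=|I^1|-3$. The set $S'=\{u\}\cup Y$ is a Steiner set of $G^1$ — every terminal is covered, and connectivity within the clique $C^1$ is automatic — and its size is at most $1+(|I^1|-3)=|I^1|-2$. Hence a minimum Steiner set cannot exceed this, giving $|S|\leq|I^1|-2$.

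Combining the two bounds yields $|I^1|-4\leq|S|\leq|I^1|-2$, completing the proof. The main obstacle, such as it is, is the upper-bound construction: I must make sure the exhibited set $S'$ genuinely connects all terminals, which hinges on the corresponding clique set $Y$ assigning each terminal of $I^2$ a clique neighbour (valid because $G^2$ is $l$-split with $l\leq 2$, so every terminal has a neighbour in $C^2$) together with the fact that $C^1$ is a clique, so $\{u\}\cup Y\cup I^1$ induces a connected subgraph. No delicate matching argument is needed here — unlike Theorems \ref{vi-4} and \ref{vi-3}, the upper bound only requires a crude covering, so the step is routine once the construction is written down carefully.
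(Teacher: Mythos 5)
Your proposal is correct and matches the paper's own proof: the lower bound is quoted from Theorem \ref{maxn-4}, and the upper bound is obtained exactly as in the paper by fixing $u\in V_3$ and taking the corresponding clique set $Y$ of $I^1\backslash N_{G^1}^I(u)$, so that $\{u\}\cup Y$ is a Steiner set of size at most $1+(|I^1|-3)=|I^1|-2$. The only difference is cosmetic — your invocation of Corollary \ref{cor2} is unnecessary (that $u$ covers its three neighbours is immediate from $u\in V_3$), but this does not affect the argument.
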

\begin{proof}
The lower bound is true by Theorem \ref{maxn-4}.  Theorem \ref{vi-4}, and Theorem \ref{vi-3} characterizes the $3$-split graphs such that $|S|=|I^1|-4$, and $|S|=|I^1|-3$, respectively.   We shall now look into the upper bound.  Since $G^1$ is $3$-split, there exist $u\in V_3$.  Let $Y$ be the corresponding clique set of $I^1\backslash N_{G^1}^I(u)$.  Observe that $N_{G^1}^I(Y\cup \{u\})=I^1$ and $|S|\leq |Y|+1$.  Since $|Y|\leq|I^1|-3$,  it follows that $|S|\leq |I^1|-2$.  Therefore the theorem. \hfill \qed
\end{proof}
%From Lemma \ref{maxn-4}, minimum number of Steiner vertices in a $K_{1,4}$-free split graph with $|V_3|\geq 1$ to cover all of $I$ is at least $p-4$ where $p=|I|$.  The following table shows the cardinality of Steiner set $S$ for $K_{1,4}$-free split graph $G$ with $R=I$.
%\begin{table}
%\begin{tabular}{|l|l|}
%Graph $G$ & $|S|$\\
%\hline
%$(1,2)$-split graph &  $m$, size of max matching in $M(G)$\\
%$|V_3|=1$ & $p-2$ or $p-3$ \\
%$|V_3|=2$ & $p-3$ \\
%$|V_3|\geq 3$ &  $p-3$ or $p-4$ \\
%\end{tabular}
%\end{table}
%\newpage
% \subsection{Application: Steiner tree in $K_{1,4}$-free Split Graphs} 
\subsection{Polynomial-time algorithm to find a minimum Steiner tree}  \label{K14algsection} 
Using the structural results presented in Section ${3}$, in this section, we shall present a polynomial-time algorithm to find a minimum Steiner tree in $K_{1,4}$-free split graphs.    Algorithm \ref{alg3} finds a minimum Steiner set $S$ of a given $K_{1,4}$-free split graph $G^0$ with $R\subseteq V(G^0)$ being terminal vertices.  Further, the minimum Steiner tree $T$ is obtained using standard Breadth First Search on $G[R\cup S]$.  \\
We shall now present a sketch of the algorithm and a detailed one is presented in Algorithm \ref{alg3}.  As part of preprocessing, we prune the sets $S_1, S_2,$ and $ S_3$ as defined in Section \ref{section2.1}.  Since $G^1$ is a $K_{1,4}$-free split graph, $G^1$ is $l$-split, $l\leq 3$.   We come across four cases as follows.  If $G^1$ is a $0$-split graph, then $R$ is connected and the minimum Steiner set $S=\emptyset$.  If $G^1$ is a $1$-split graph, then the corresponding clique set of $I^1$ is a minimum Steiner set.  If $G^1$ is a $2$-split graph, then we find the labelled graph $M$ of $G^1$ and the maximum matching $P$ of $M$.  Subsequently, we find the corresponding vertex set $Q\subseteq C$ of the matching $P$ and the corresponding clique set  $Q'$ of $I\backslash N_{G}^I(Q)$.  The minimum Steiner set is $S=Q\cup Q'$ ( from Lemma \ref{2splitalpha} ).   Given a $3$-split graph, we perform a transformation to obtain a $2$-split graph.  We identify the size of a minimum Steiner set and the Steiner set with the help of the labelled graph associated with the transformed $2$-split graph.  Interestingly, based on the matching size, we get to identify the  size of minimum Steiner set and the corresponding clique set helps us to identify the Steiner set.  It is important to highlight the fact that if matching size is $1$, we look at two different labelled graphs to identify the minimum Steiner set.  The detailed  algorithm is presented in Algorithm \ref{alg3}.

%Finally, for a  $3$-split graph $G^1$, for each vertex $v\in V_3$, we construct the graph $G^2$ induced on $V(G^1)\setminus N_{G^1}^I(v)$.  Then we find the labelled graph $M$ of each $G^2$, and among those labelled graphs, we find the one for which $\alpha(M)$ is the highest, say $M'$ and $P^1$ be a maximum matching in $M'$ such that $M'$ is the labelled graph of $G^2$ induced on $V(G^1)\backslash N_{G^1}^I(u)$.  If $|P^1|\geq 1$, then we find $S^1=\{u\}\cup Q$, where $Q$ is the corresponding vertex set of the matching $P^1$.   If $|P^1|=0$, then we find the induced graph $H^2$ on the vertex set $V(G^1)\backslash V_3$, the labelled graph $M^2$ of $H^2$, and the maximum matching $P^2$ of $M^2$.  If $|P^2|=3$, then $S^1$ is the corresponding vertex set of the matching $P^2$.  By the end, we find the minimum Steiner tree as union of  two vertex sets, $S^1$, and the corresponding clique set $S^2$ of the vertex set $I\backslash N_{G}^{I}(S^1)$.  
%we construct the set $S^1$ in such a way that $N_{G^1}^I(S^1)$ is maximized so that in $S^2$, there is a one-to-one correspondence between the vertices of $S^2$ and $I^1\setminus N_{G^1}^I(S^1)$.  
\begin{algorithm}[!h]
\caption{Compute\_Steiner\_Tree\_$K_{1,4}$-free($G^0,R$)} \label{alg3}
\begin{algorithmic}[1]
\STATEx {/*$G^0$ \texttt{is the $K_{1,4}$-free split graph and} $R$ $\subseteq$ $V(G^0)$ \texttt{is the set of terminal vertices} */}
\STATE {$G^{1}=$ Pruning($G^0,R$) i.e., $G^{1}=G^0\backslash (S_1\cup S_2\cup S_3)$} \label{prun}
\STATE {Initialize the output Steiner set $S= \emptyset$}
\IF {$G^1$ is a $1$-split graph}
% \FOR {every $v\in I^1$}
 %  \STATE {Select exactly one vertex $w\in C^1$ such that $\{v,w\}\in E(G^1)$ and include $w$ in $S$.} \label{step1split}
% \ENDFOR
 \STATE{Find corresponding clique set $S$ of $I^1$}  \label{step1split}
\ELSIF {$G^1$ is a $2$-split graph}
\STATE{$S$=Compute\_Steiner\_$2$-split\_graph($G^1$)}
\ELSE
\STATE {$S$=Compute\_Steiner\_$3$-split\_graph($G^1$) }
\ENDIF
\STATE {Obtain the Breadth First Search tree $T$ in the graph induced on vertices $S\cup R$.} \label{st}
\STATE {Output $T$}
\end{algorithmic}
\end{algorithm}
\vspace{-13pt}
\begin{algorithm}[!h]
\caption{Compute\_Steiner\_$2$-split\_graph($G$) } \label{alg2split}
\begin{algorithmic}[1]
\STATEx  { /*$G$ \texttt{is $K_{1,4}$-free and $2$-split graph */ } }
\STATE {Initialize the Steiner set $S^2=\emptyset$}
\STATE {Construct the labeled graph $M$ of $G$}  
 \STATE {Find a maximum matching $P$ in $M$ and find the corresponding vertex set $S^1$ of $P$} \label{step2s_m}
% \FOR {every $u\in I\backslash N_{G}^{I}(S^1)$ }%all the unmarked terminal vertices $u$ , $u \in I\cap V(H_{2}^{'}) $}
%  \STATE {Select exactly one vertex $w\in C$ such that $\{u,w\}\in E(G)$ and include $w$ in $S^2$.}\label{step2s_s2}
% \ENDFOR
 \STATE {Find the corresponding clique set $S^2$ of the vertex set $I\backslash N_{G}^{I}(S^1)$ }\label{step2s_s2}
 \STATE  {Return Steiner set $S^1\cup S^2$}
\end{algorithmic}
\end{algorithm}

\begin{algorithm}[!h]
\caption{Compute\_Steiner\_$3$-split\_graph($G$) } \label{alg3split}
\begin{algorithmic}[1]
\STATEx  {/*$G=I+C$ \texttt{is $K_{1,4}$-free $3$-split graph}*/}
\STATE {Initialize Steiner set $S^2=\emptyset$, $S^1=\{u\}$ where $u\in V_3$ and edge set $P^1=\emptyset$ }
\FOR {every vertex $v\in V_3$}
\STATE {Find the $2$-split graph $G^2$ induced on $V(G)\backslash N_{G}^I(v)$ and the labeled graph $M$ of $G^2$}
\STATE {Find a maximum matching $P^*$ of $M$} \label{step3s_mat}
\IF {$|P^1|\leq|P^*|$}
\STATE {Update $P^1=P^*$}
\STATE {Update $S^1=\{v\}~\cup$ corresponding vertex set of $P^1$ in $G^2$} \label{step3s_s1}
\ENDIF
\ENDFOR
\IF{$|P^1|<1$}
\STATE {Find the $2$-split graph $H^2$ induced on $V(G)\backslash V_3$} 
\STATE {Find the labeled graph $M^2$ of $H^2$ and a maximum matching $P^2$ of $M^2$}
\IF {$|P^2|=3$}
\STATE{$S^1=$  corresponding vertex set of $P^2$ in $H^2$}  \label{step3s_s1_2}
\ENDIF
\ENDIF
% \FOR {every $u\in I\backslash N_{G}^{I}(S^1)$ }
%  \STATE {Select exactly one vertex $w\in C$ such that $\{u,w\}\in E(G)$ and include $w$ in $S^2$.} \label{step3s_s2}
% \ENDFOR
\STATE{Find the corresponding clique set $S^2$ of the vertex set $I\backslash N_{G}^{I}(S^1)$} \label{step3s_s2}
\STATE {Return Steiner set $S^{1}\cup S^2$ }
\end{algorithmic}
\end{algorithm} %\vspace{-10pt}
%\subsubsection{Trace of the algorithm}
%\begin{figure}[!h]
%\begin{center}
%\includegraphics[scale=1]{trace1.eps}
%\label{trace1}
%\caption{Trace of the algorithm when there is 3 degree vetex and a pendant vertex}
%\end{center}
%\end{figure}
\vspace{-20pt}
\subsubsection{Proof of correctness of Algorithm \ref{alg3}} 
Step \ref{prun} of Algorithm \ref{alg3}  prunes the input graph $G^0$ to obtain $G^1$ and by Lemma \ref{lem1}, an optimal Steiner set of $G^1$ is also an optimal Steiner set of $G^0$.   If $G^1$ is a $1$-split graph, then $|S|=|I^1|$ and our algorithm correctly computes such a Steiner set $S$ in step \ref{step1split}.  If $G^1$ is a $2$-split or $3$-split graph, then Algorithm \ref{alg3} calls Algorithm \ref{alg2split}, or Algorithm \ref{alg3split}, respectively.  Now we shall look into Algorithm \ref{alg2split} in detail.  The algorithm finds the labeled graph $M$ in Step $2$.  
%
% Note that minimum Steiner set of $G^1$ corresponds to the minimum edge cover of a connected graph $M$, the labeled graph of $G^1$.  
%
Note that Algorithm \ref{alg2split} in Step \ref{step2s_m} finds a maximum matching $P$ in $M$, and finds the corresponding vertex set $S^1$ of $P$ such that $|S^1|=|P|$.   Step \ref{step2s_s2} finds $S^2$ such that  $|S^2|=|I^1|-2|P|$.   The Steiner set $S^1\cup S^2$ is returned in Step $5$ where $|S^1\cup S^2|=|I^1|-|P|$, which is correct due to Lemma \ref{2splitalpha}  and hence Algorithm \ref{alg2split} returns an optimum Steiner set. 

In Algorithm \ref{alg3split}, for every $v\in V_3$, we find $G^2$ and its labeled graph $M$ in Step $3$.  A maximum matching on $M$ is obtained in Step $4$.  Step $6$ and \ref{step3s_s1} updates maximum matching $P^1$ and its corresponding vertex set $S^1$ found so far.  Note that by Theorem \ref{bound}, Steiner set $S$ of $G$ is bounded as $|I|-4\leq |S|\leq |I|-2$.  We can see the following cases. \\  \emph{Case (i)} $|S|= |I|-4$.   By Theorem \ref{vi-4}, $|P^1|=2$ and it follows that $|S^1|=3$ and $|N_{G}^I(S^1)|=7$.  Step $17$ finds $S^2$ such that $|S^2|=|I|-7$.  Step $18$ returns $S^1\cup S^2$ where $|S^1\cup S^2|=|I|-4$. \\  \emph{Case (ii)} $|S|= |I|-3$.  By Theorem \ref{vi-3}, either $|P^1|=1$ or $|P^2|=3$.  If $|P^1|=1$, then $|S^1|=2$ and $|N_{G}^I(S^1)|=5$.  Step $17$ finds $S^2$ such that $|S^2|=|I|-5$.  Note that $|S^1\cup S^2|=|I|-3$.  If $|P^2|=3$, then $|S^1|=3$ and $|N_{G}^I(S^1)|=6$.  Step $17$ finds $S^2$ such that $|S^2|=|I|-6$.  Observe $|S^1\cup S^2|=|I|-3$.  \\  \emph{Case (iii)} $|S|= |I|-2$.  It follows that $|P^1|=0$.  Since we initialized $S^1$ with a vertex $u\in V_3$, $|S^1|=1$ and $|N_{G}^I(S^1)|=3$.  Observe that $|S^2|=|I|-3$ and $|S^1\cup S^2|=|I|-2$.   This completes the case analysis and Algorithm \ref{alg3split} correctly computes a Steiner set of $G$.  Therefore, Algorithm \ref{alg3} correctly computes the minimum Steiner tree in Step $10$.
\vspace{-20pt}
\subsubsection{Run-time analysis of Algorithm \ref{alg3}} \noindent \\
Let $n, m$ represents the size of vertex set, and the edge set, respectively of the input graph $G^0$.  We shall first analyze the run-time of Algorithm \ref{alg2split} and Algorithm \ref{alg3split} as Algorithm \ref{alg3} invokes Algorithm \ref{alg2split} or Algorithm \ref{alg3split} at Steps $6$, $8$, respectively.   For Algorithm \ref{alg2split}, observe that creation of the labeled graph in Step $2$ needs $O(n)$ effort as $|E(M)|+|V(M)|=O(n)$.  Step $3$ finds a maximum matching of $M$ which can be done in $O(n^{\frac{3}{2}})$ time using general graph maximum matching algorithm \cite{maxmatching}.  Corresponding clique set in Step $4$ can be found in $O(n)$ time.  Therefore, the run-time of Algorithm \ref{alg2split} is $O(n^{\frac{3}{2}})$.   Consider Algorithm \ref{alg3split}, Steps $3$ to  $8$ are iterated at most $n$ times.  Step $3$ needs $O(n)$ effort.  Finding a matching of $M^2$ in step \ref{step3s_mat} needs $O(n^{\frac{3}{2}})$ time.  Steps $6,7$ incurs constant effort.  Therefore, the iteration of Steps $3$ to $8$ involves  $O(n^{\frac{5}{2}})$ effort.  Note that Steps $11$, $12$ need $O(n)$, $O(n^{\frac{3}{2}})$, respectively and Step \ref{step3s_s1_2} incurs a $O(n)$ effort.  Finding $S^2$ in step \ref{step3s_s2} can be done in $O(n)$ time.  Overall, the run-time of Algorithm \ref{alg3split} is $O(n^{\frac{5}{2}})$.

Now we shall discuss run time of Algorithm \ref{alg3}.  Pruning of verices in step \ref{prun} of Algorithm \ref{alg3} takes $O(n.\Delta)$ effort where $\Delta$ denotes maximum degree of the input graph $G$.  Step \ref{step1split} takes $O(n)$ time.  Steps $6$, $8$ takes $O(n^{\frac{3}{2}})$ time, $O(n^{\frac{5}{2}})$ time, respectively.  Step $10$ incurs $O(n+m)$ time.  Therefore the run time of Algorithm \ref{alg3} is $O(n^{\frac{5}{2}})$.  Thus, Steiner tree in $K_{1,4}$-free split graph is polynomial-time solvable.
%Step \ref{call1} calls Algorithm \ref{alg-s12}.  Algorithm \ref{alg-s12} calls Algorithm \ref{match}, which inturn uses maximum matching algorithm which incurs $O(n^{\frac{1}{2}}.m)$ time \cite{}.  Note that in worst case all the vertices $v\in V(C^{'})$ has $d_{G^{'}}^{I}(v)=3$, and step \ref{call2} calls Algorithm \ref{alg-s12} for all such vertices.  Therefore, overall running time of the algorithm is, $O(n^{\frac{3}{2}}.m)$  
%
\vspace{-10pt}
\section{Steiner tree in $K_{1,5}$-free Split Graphs is NP-complete}
%We highlight the results of \cite{white}, in which they prove the NP-completeness of Steiner tree problem in split graphs.  We observe that the reduction produces $K_{1,5}$-free split graphs.  For the sake of completeness, we are giving the complete proof of NP-completeness of Steiner tree problem in $K_{1,5}$-free split graphs.  Exact cover problem is one among  Karp's $21$ NP-complete problems \cite{karp}.  Exact $3$ cover problem is defined as follows.
In the earlier section, we have presented a polynomial-time algorithm for Steiner tree in $K_{1,4}$-free split graphs.  In this section, we present the other half of the dichotomy, which is to show that Steiner tree in $K_{1,5}$-free split graph is NP-complete.  Interestingly, the reduction presented in \cite{white} generates instances of $K_{1,5}$-free split graphs.  For the sake of completeness, we present our observations along with proofs.  Towards this attempt, we recall the classical problem Exact 3 cover \cite{karp} which is a candidate NP-complete problem for our investigation.

\begin{center}
\fbox{\parbox[c][][c]{0.9\textwidth}{    
\emph{Exact-3-cover(Z,T)}\\
Instance:	A Collection $T$ of $3$ element subsets of a set $Z=\{u_1,u_2,\ldots,u_{3q}\}$.\\
Question:	Is there a sub collection $T^{'}$ $\subseteq$ $T=\{c_1,c_2,\ldots,c_n\}$ such that for every $u_i\in Z$, $1\leq i\leq 3q$ $u_i$ belongs to exactly one member of $T^{'}$?} }	
\end{center}
%
%
%\begin{figure}[!h]
%\begin{center}
%	\includegraphics[scale=1.5]{k15.eps}%
%\end{center}
%\caption{$K_{1,5}$}%
%\label{k15}%
%\end{figure}
%
We recall the decision version of Steiner tree problem, restricted to $K_{1,5}$-free split graphs.

\begin{center}
\fbox{\parbox[c][][c]{0.8\textwidth}{    
\emph{Steiner tree(G,R,k)}\\
Instance:	$K_{1,5}$-free Split Graph $G(V,E)$, Terminal Set $R\subseteq V(G)$, Integer $k\geq0$\\
Question:	Is there a set $S\subseteq V(G)\backslash R$ such that $|S|\leq k$ and $G[S\cup R]$ is connected?} }	
\end{center}

%\begin{center}
%\fbox{\parbox[c][][c]{0.8\textwidth}{    
%\emph{Steiner tree(G,R,k)}\\
%Instance:	$K_{1,5}$-free split graph $G=I+C$, Terminal Set $R=I$, Integer $k\geq0$\\
%Question:	Is there a set $S\subseteq C$ such that $|S|\leq k$ and $G[S\cup R]$ is connected?} }	
%\end{center}
\begin{theorem} \label{k15stree}
Steiner tree problem in $K_{1,5}$-free split graph is NP-complete.
\end{theorem}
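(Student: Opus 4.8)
The plan is to prove NP-completeness in the standard two-part manner: first showing membership in NP, then establishing NP-hardness by a polynomial-time reduction from \emph{Exact-3-cover}, which is stated above as our candidate problem. Membership in NP is routine: given a candidate Steiner set $S$, one verifies in polynomial time that $|S|\leq k$ and that $G[S\cup R]$ is connected (a single graph traversal suffices). So the substance of the proof lies entirely in the reduction.

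For the hardness reduction, given an instance $(Z,T)$ of Exact-3-cover with $Z=\{u_1,\ldots,u_{3q}\}$ and $T=\{c_1,\ldots,c_n\}$, I would construct a split graph $G=I+C$ as follows. The clique $C$ will consist of one vertex $\hat c_j$ for each set $c_j\in T$; the independent set $I$ will contain one vertex $\hat u_i$ for each element $u_i\in Z$. I make $\hat c_j$ adjacent to $\hat u_i$ precisely when $u_i\in c_j$. To force the clique vertices into the Steiner set rather than being avoidable, I would add one extra ``root'' or ``apex'' vertex to the independent set that is adjacent to all of $C$, and declare the terminal set $R$ to be all of $I$ (including this apex). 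I then set the budget to $k=q$. The intended correspondence is that a Steiner set of size at most $q$ consists of $q$ clique vertices whose corresponding $3$-element sets exactly cover $Z$; conversely an exact cover $T'$ of size $q$ yields a Steiner set $\{\hat c_j : c_j\in T'\}$ of size $q$ connecting every terminal through the apex.

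The two directions of correctness are then: (forward) if $T'$ is an exact cover then each $\hat u_i$ is adjacent to the unique chosen $\hat c_j$ covering it and all chosen clique vertices connect to the apex, so $G[S\cup R]$ is connected with $|S|=q$; (backward) if $S$ with $|S|\leq q$ makes $G[S\cup R]$ connected, then every one of the $3q$ non-apex terminals must have a neighbor in $S\subseteq C$, and since each clique vertex covers exactly $3$ elements of $Z$, covering all $3q$ elements with at most $q$ clique vertices forces $|S|=q$ and forces the chosen sets to partition $Z$ exactly, giving an exact cover. The key remaining obligation, and the part I expect to require the most care, is verifying that the constructed graph is genuinely $K_{1,5}$-free, since that is precisely what distinguishes this theorem from the general split-graph hardness of \cite{white}. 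Because each set in $T$ has exactly three elements, every clique vertex $\hat c_j$ satisfies $d_G^I(\hat c_j)=3$ (its three element-neighbors), while the apex contributes an independent-set vertex of high degree that does not serve as the center of an induced star; I would argue that the maximum size of an induced star centered in $C$ is bounded because any such center has at most three independent neighbors outside the apex, and the apex itself lies in $I$ so cannot be a star center with independent leaves in $C$. Pinning down this $K_{1,5}$-freeness claim precisely, and confirming the reduction is polynomial-time computable, completes the proof.
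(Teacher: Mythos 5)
Your reduction is essentially the paper's: the same element/set gadget ($I=Z$, one clique vertex per triple, adjacency by membership), the same terminal set $R=I$, and the same budget $q=|Z|/3$, with both correctness directions going through exactly as you describe. The one place you genuinely deviate from the paper --- the apex vertex adjacent to all of $C$ --- is both unnecessary and the source of a real gap. It is unnecessary because nothing needs to be ``forced'': since $R=I$ and every terminal's neighborhood lies inside $C$, any Steiner set is automatically a subset of $C$, and any subset of $C$ induces a connected graph because $C$ is a clique. The paper simply omits the apex, and then $K_{1,5}$-freeness is immediate: a star center must lie in $C$, it has exactly $3$ neighbors in $I$, and at most one leaf can come from $C$ (two clique leaves would be adjacent), so no induced star exceeds $K_{1,4}$.

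With your apex, however, the freeness argument you sketch does not close. In your graph every clique vertex $\hat c_j$ has $d_G^I(\hat c_j)=4$, not $3$ (its three element-neighbors \emph{plus} the apex), and your counting --- at most three independent neighbors outside the apex, possibly the apex itself, and at most one clique vertex as a leaf --- bounds the number of candidate leaves by $3+1+1=5$, which is exactly the configuration you need to exclude, so the count by itself proves nothing. The dangerous configuration is: center $\hat c_j$, leaves consisting of its three elements, the apex, and one clique vertex $\hat c_l$ whose triple is disjoint from $c_j$. What kills it is a fact you never invoke: the apex is adjacent to \emph{every} vertex of $C$, so the apex and a clique-vertex leaf can never both be leaves of the same induced star. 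Once that observation is added, the maximum induced star in your graph is $K_{1,4}$ and the construction is indeed $K_{1,5}$-free, so the gap is fixable; but as written, the freeness claim --- which, as you correctly note, is precisely what separates this theorem from the general split-graph hardness of \cite{white} --- is not established.
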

\begin{proof} \textbf{Steiner tree is in NP}  Given a certificate $S=(G,R,k)$, we show that there exist a deterministic polynomial-time algorithm for verifying the validity of the certificate $S$.   Note that the standard Breadth First Search algorithm can be employed to check whether $S\cup R$ is connected.  $|S|=k$ can be verified in linear time and therefore, overall certificate verification need $O(n+m)$ time, where $n=|V(G)|,~m=|E(G)|$.  	Therefore, we can conclude that Steiner tree is in NP. \\
\textbf{Steiner tree is NP-Hard}
An instance of Exact $3$ cover(Z,T) is reduced to an instance of Steiner tree (G,R,k)  problem as follows:  $I=Z$, $C=\{v_i~|~c_i\in T\}$, $1\leq i\leq n$ and $V(G)=I\cup C$.  Informally, for every element $u\in Z$, create a vertex $u$ such that $u\in I$.  For every member $c_i\in T$, create a vertex $v_i$ such that $v_i\in C$.   $E(G)= \{\{v_i,v_j\}~|~v_i,v_j\in C\}$, $1\leq i\neq j\leq n$ $\cup$ $\{\{v_l,u\}~|~v_l\in C, u\in I$, and $u\in c_l\}$.    $R=I$ and $k=\frac{|Z|}{3}$.   In this reduction, $|V(G)|=|Z|+|T|$ and $|E(G)|=\binom{|T|}{2}+3|T|$.   The above construction is therefore polynomial to the size of input.   We now show that instances created by this reduction are $K_{1,5}$-free split graphs.  On the contrary, assume that there exist a $K_{1,5}$ induced on vertices $\{u,v,w,x,y,z\}$.  Note that at most two vertices (say $u,v$) from clique $C$ can be included in the $K_{1,5}$.  Clearly, $w,x,y,z\in I$ and without loss of generality, $d_{G}^{I}(v)=4$.  This implies that there exist a $4$ element subset $c\in T$ corresponding to the clique vertex $v\in C$, which is a contradiction as all subsets are of size $3$ in collection $T$.  Therefore it follows that the reduced graph $G$ is $K_{1,5}$-free split graph. We now show that there exist an Exact-3-cover(Z,T) if and only if there exist a Steiner tree(G,R,k) in the reduced graph $G$ on at most $k$ Steiner vertices.  For \emph{Necessity:} If there exist $T^{'}\subseteq T$, $|T^{'}|=\frac{|Z|}{3}$ which covers all the elements of $Z$, then the set of vertices $S=\{v\in C~|~c\in T^{'}\}$ where $v$ is the corresponding vertex of $c$ forms a Steiner set in $G$ as $R=Z$.  Also note that $|S|=\frac{|Z|}{3}$.  For \emph{Sufficiency:} If there exist a Steiner set $S\subseteq C$ in the reduced graph $G$ on at most $k=\frac{|Z|}{3}$ Steiner vertices, then observe that for all vertex $v\in S$, $d_{G}^I(v)=3$, $|S|= \frac{|Z|}{3}$ and $|N_{G}^I(S)|=|Z|$.  It follows that there does not exist $u,v\in S$ such that $N_{G}^I(u)\cap N_{G}^I(v)\neq\emptyset$.   Therefore, $T^{'}=\{c\in T~|~v\in S\}$ where $v$ is the corresponding vertex of $c$ forms an exact 3 cover of $Z$.  This completes the proof of the claim. 
%There exist an Exact\_3\_cover(Z,T) on $T^{'}\subseteq T$, $|T^{'}|=\frac{|Z|}{3}$ if and only if there exist a Steiner\_tree(G,R,k) in the reduced graph $G$.  
We can conclude that Steiner tree problem is  NP-complete in $K_{1,5}$-free split graphs.  \hfill \qed
\end{proof}

\section {Conclusions and Future Work}
We have presented an interesting dichotomy result that Steiner tree problem is polynomial-time solvable in $K_{1,4}$-free split graphs and NP-complete in $K_{1,5}$-free split graphs.  This result is tight and it identifies the right gap between NP-completeness and polynomial-time solvability of Steiner tree in split graphs.  Using the structural results presented here, an interesting direction for further research would be to explore the complexity of other classical problems which are NP-complete restricted to split graphs.
%Other related questions such as status of Steiner tree problem is $K_{i,j}$-free split graph, $2\leq i,j\leq 4$ remains open.

\bibliographystyle{splncs1}
\bibliography{steinerref}

%%%%%%%%%%%%%%%%%%%%%%%%%%%%%%%%%%%%%%%%%%%%%%%%
%%%%%%%%%%%%%%%   OTHER    PROBLEMS   %%%%%%%%%%%%%%%%
%%%%%%%%%%%%%%%%%%%%%%%%%%%%%%%%%%%%%%%%%%%%%%%%

\end{document}